\documentclass[runningheads,anonymous]{llncs}

\usepackage{amsmath}
\usepackage{amssymb}
\usepackage{booktabs}
\usepackage[capitalize]{cleveref}
\usepackage{subcaption}
\usepackage{url}
\usepackage[T1]{fontenc}
\usepackage{graphicx}
\usepackage{color}

\begin{document}

\title{Fast Construction of Learned Count-Min Sketch via Ternary Search (Full Version)}

\titlerunning{Fast Construction of Learned Count-Min Sketch via Ternary Search}

\author{Ryusuke Inami, Yusuke Matsui}

\authorrunning{Ryusuke Inami et al.}

\institute{The University of Tokyo, Japan\\
\email{\{inami, matsui\}@hal.t.u-tokyo.ac.jp}
}

\maketitle
\begin{abstract}
The Learned Count-Min Sketch (LCMS) is a learned data structure that estimates element frequencies in a multiset and has been experimentally shown to outperform classical data structures in the capacity-accuracy trade-off.
However, its performance depends heavily on parameter selection.
Because systematic optimization has not been adequately discussed, previous approaches relied on inefficient brute-force methods.
In this study, we propose a method to rapidly optimize the parameters of the original LCMS.
We experimentally confirmed that when the machine learning model performs well enough, using a single hash function is sufficient to optimize the weighted error metric.
Based on this, we introduce a ternary search approach to efficiently find the optimal proportion of Unique Buckets.
Our method achieves the same performance as brute-force approaches while speeding up parameter optimization by $216$-$729$ times when the machine learning model's performance is sufficient.
Furthermore, even when the model's performance is suboptimal, our approach still determines appropriate parameters $221$-$740$ times faster than the brute-force approach.

\keywords{Learned Data Structure \and Count-Min Sketch \and Ternary Search.}
\end{abstract}

\section{Introduction}
The frequency estimation problem, which determines the frequency of each element in a multiset, is fundamental in computer science and is related to tasks across fields such as networks~\cite{liu2016one} and machine learning~\cite{aghazadeh2018mission}.
Obviously, it is possible to accurately count frequencies using dictionaries or arrays for small datasets. 
For large datasets, however, probabilistic data structures such as the Count-Min Sketch (CMS)~\cite{cormode2005improved} are becoming mainstream for estimating approximate frequencies with small capacity.

In recent years, a framework called ``learned data structures'' has been drawing attention~\cite{mitzenmacher2022algorithms}. This framework uses a machine learning model that learns the data distribution, then incorporates the model's predictions into data structures or algorithms to improve performance.
This framework has also been applied to the CMS. Hsu et al. (2019) proposed the Learned Count-Min Sketch (LCMS)~\cite{hsu2019learning}, which incorporates a machine learning model into the CMS. Since then, various types of LCMS have been proposed~\cite{aamand2023improved,cheng2023alsketch}. These works show a better trade-off between capacity and accuracy than the classical CMS.

Although LCMS outperforms the CMS, its performance depends on the parameters.
However, the question of how to optimize these parameters has not been sufficiently discussed.
Consequently, Hsu et al. (2019) \cite{hsu2019learning}, for example, determined parameter values by simulating multiple values.
This brute-force approach is time-consuming in large datasets.
Similarly, Cheng et al. (2023) \cite{cheng2023alsketch} experimentally determined the number of hash functions and the proportion of Unique Buckets, whereas Aamand et al. (2023) \cite{aamand2023improved} only verified the case where the proportion of Unique Buckets is 0.5.

To address this problem, under reasonable assumptions, we propose a method for rapidly calculating the parameters of the original LCMS \cite{hsu2019learning} to minimize weighted error, the most common evaluation metric.
Our contributions are summarized as follows:
\begin{itemize}
    \item We propose a method to optimize two types of parameters of the original LCMS without requiring simulations under reasonable assumptions.
    \item We conducted experiments on the AOL Query Logs dataset\cite{pass2006picture}, the Wikipedia Pageviews Dataset\cite{wikimedia2026pageviews}, and the Google Books Ngram datasets \cite{michel2011quantitative}, and found that considering only a single hash function is sufficient in order to minimize weighted error.
    \item We show that, in the case of a single hash function, a ternary search can efficiently calculate the optimal proportion of Unique Buckets, when the machine learning model performs sufficiently well.
    \item Compared to brute-force methods, our approach optimizes parameters up to $216$-$729$ times faster without sacrificing accuracy when the machine learning model performs well. Even when model performance is poor, it still calculates appropriate optimization $221$-$740$ times faster.
\end{itemize}

\section{Preliminaries}
\subsubsection{Notations}
For a positive integer $n$, we define $[n]$ as $\{0, 1,\dots n-1\}$.
Let $\mathcal{S}$ be the multiset.
Without loss of generality, in the following explanation, we assume that the elements of the multiset are natural numbers from $0$ to $n-1$.
Here, $n$ is the number of unique elements in $\mathcal{S}$.
For each element $i \in [n]$ of $\mathcal{S}$, we define its count in $\mathcal{S}$ as its frequency and denote it as $f_i$.
Let $N$ be the sum of $f_i$ for all $i \in [n]$ (i.e., $N = \sum_{i=0}^{n-1} f_i = \left| \mathcal{S} \right|$).

\subsubsection{Count-Min Sketch}
Count-Min Sketch~\cite{cormode2005improved} is a probabilistic data structure that estimates the frequencies of elements in a multiset $\mathcal{S}$.
CMS estimates frequencies using $d$ hash functions $h_i: [n] \to [w]$ and a two-dimensional array $C$ of $d \times w$.
When adding $f$ elements $i$ to $\mathcal{S}$, CMS updates $C[j,h_j(i)] \gets C[j,h_j(i)] + f$ for all $j \in [d]$.
The frequency of element $i$ is estimated as $\tilde{f}_i = \min_{j} C[j,h_j(i)]$.
Hereafter, adding a tilde to a variable indicates that it is an estimated output by the data structure.

\subsubsection{Learned Count-Min Sketch}
\label{ssec:22}
The original LCMS \cite{hsu2019learning} is a data structure that places a machine learning model in front of the CMS.
The machine learning model determines whether an input element is a ``heavy hitter'', i.e., one that occurs frequently.
When the model identifies a heavy hitter, it stores the element in an array called Unique Buckets to accurately calculate its frequency.
For all other elements, the CMS estimates their frequencies.
Hereafter, LCMS means the original LCMS proposed by Hsu et al. (2019).

\subsubsection{Capacity Evaluation Metrics}
In LCMS, both Unique Buckets and CMS are arrays.
Assuming the number of bits per element in the array is constant, the capacity used for the array is proportional to the number of elements; therefore, in the following discussion, we treat capacity and the number of elements in the array as equivalent.
If the length of the array that makes up Unique Buckets is $c$, and the CMS is a two-dimensional array of $d \times w$, then the capacity $B$ of the LCMS is $c + dw$.

\subsubsection{Error Evaluation Metrics}
\label{ssc:01}
When discussing the performance of frequency estimation methods, we need to quantify estimation error.
In learning-based frequency estimation methods, the estimation error is often evaluated using the following equation.
\begin{align}
    \label{eq:01}
    \mathrm{Err} = \frac{1}{N} \sum_{i \in \mathcal{S}} \left|\tilde{f}_i - f_i\right| \cdot f_i
\end{align}
This equation weights the error of $\tilde{f}_i$ by $f_i$.
We define this evaluation metric as ``weighted error''.

In CMS, the estimation error is often evaluated by the probability that the error exceeds a certain value, which is often called ``false positive rate''.
However, since the purpose of this paper is to accelerate the construction of the original LCMS, we adopt weighted error as the error evaluation metric.

\section{Proposed Method: Parameter Determination in LCMS}
\subsection{Formulation of Parameter Determination}
Given a total capacity $B$, the performance of the LCMS is determined by the following two parameters: \textbf{the number of hash functions in the CMS $d$}, and \textbf{the proportion of Unique Buckets $c/B$}.

The determination of the proportion of Unique Buckets can be formulated as follows.
We assume that a multiset $\mathcal{S}$ and a machine learning model are given.
\begin{enumerate}
    \item We prepare a multiset $\mathcal{S}'$ for parameter determination ($\mathcal{S}'$ should have a distribution similar to the multiset $\mathcal{S}$. For example, $\mathcal{S}' \subseteq \mathcal{S}$). As mentioned before, we assume the elements of $\mathcal{S}'$ are natural numbers from $0$ to $n-1$.
    \item For elements that the machine learning model determines to be heavy hitters in $\mathcal{S}'$, we assign smaller integer labels. Specifically, we rename $i$ such that $s(0) \geq s(1) \geq \dots \geq s(n-1)$. For example, consider the case where $\mathcal{S}' = \{0, 0, 1, 2, 4, 5, 5, 5\}$ and $s(0) = 4, s(1) = 1, s(2) = 3, s(3) = 0, s(4) = 2, s(5) = 6$. If we rearrange $i$ in descending order of $s(i)$, we get $5, 0, 2, 4, 1, 3$. So we rename $5$ to $0$, $0$ to $1$, $2$ to $2$, $4$ to $3$, $1$ to $4$, and $3$ to $5$. In the end, $\mathcal{S}' = \{1, 1, 4, 2, 3, 0, 0, 0\} = \{0, 0, 0, 1, 1, 2, 3, 4\}$.
    \item Take a natural number $c$ and assign $0, 1, \cdots, c - 1$ to Unique Buckets, and all other values to CMS. For example, in the $\mathcal{S}$ mentioned earlier, if $c = 2$, then $0$ and $1$ are assigned to Unique Buckets, and the rest are assigned to CMS. This operation is equivalent to assigning a heavy hitter to a Unique Bucket.
    \item Take the $c$ that showed the best performance for the data used to determine the parameters, and set $c/B$ as the optimal proportion of Unique Buckets.
\end{enumerate}
From now on, we assume that $s(0) \geq s(1) \geq \dots \geq s(n-1)$ holds true unless otherwise noted.
We define an ``ideal'' machine learning model as one in which the relative order of scores matches the relative order of true frequencies.
For example, the $s(i)$ mentioned above is the score of an ideal machine learning model because the array of $i$ sorted in descending order by its score ($5, 0, 2, 4, 1, 3$) matches the array of $i$ sorted in descending order by the frequency in $\mathcal{S}'$. If the machine learning model is ideal, then $f_0 \geq f_1 \geq \dots \geq f_{n-1}$ holds.

\begin{figure}[tb]
    \centering
    \begin{minipage}[tb]{0.42\columnwidth}
        \centering
        \includegraphics[width=\columnwidth]{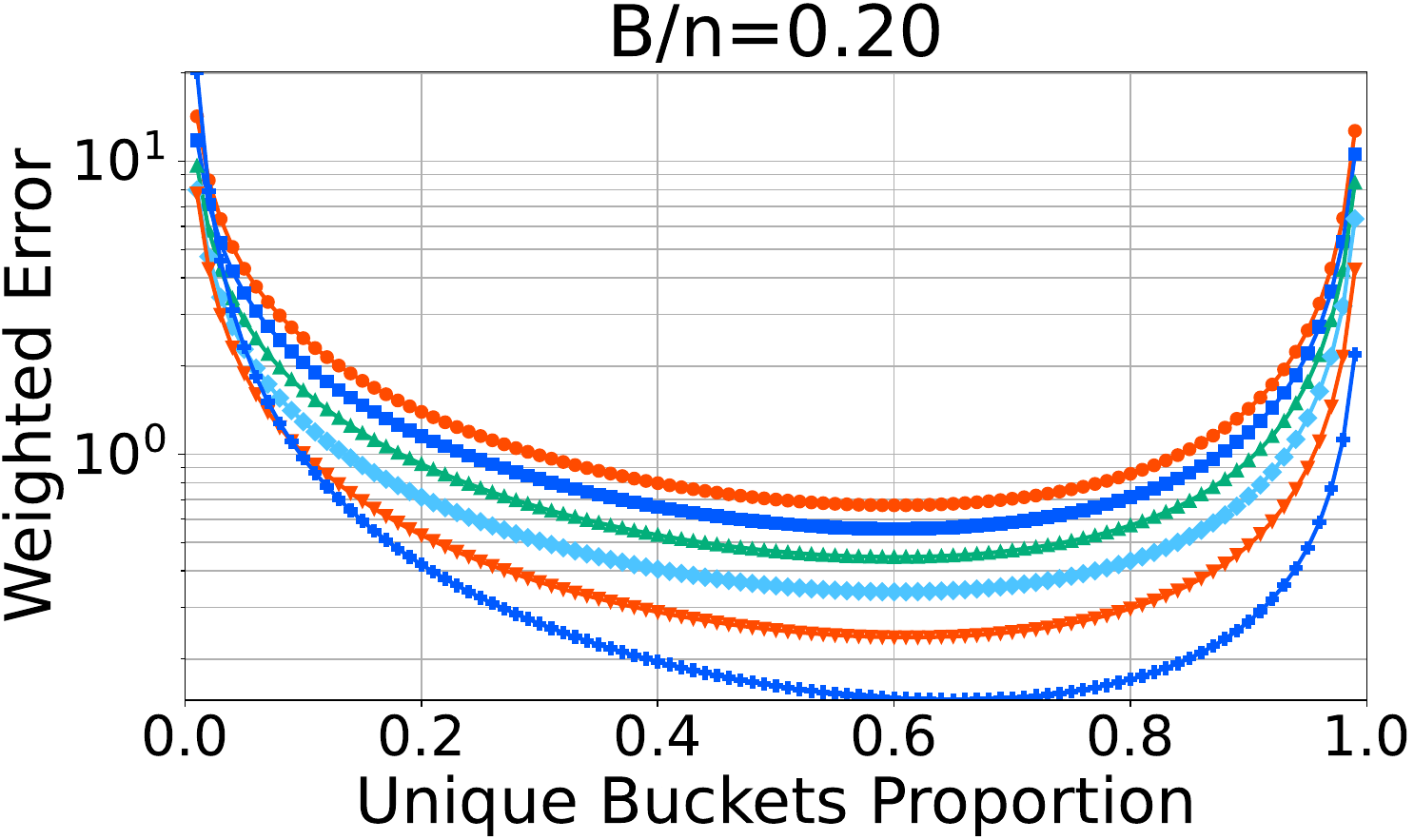}
    \end{minipage}
    \begin{minipage}[tb]{0.42\columnwidth}
        \centering
        \includegraphics[width=\columnwidth]{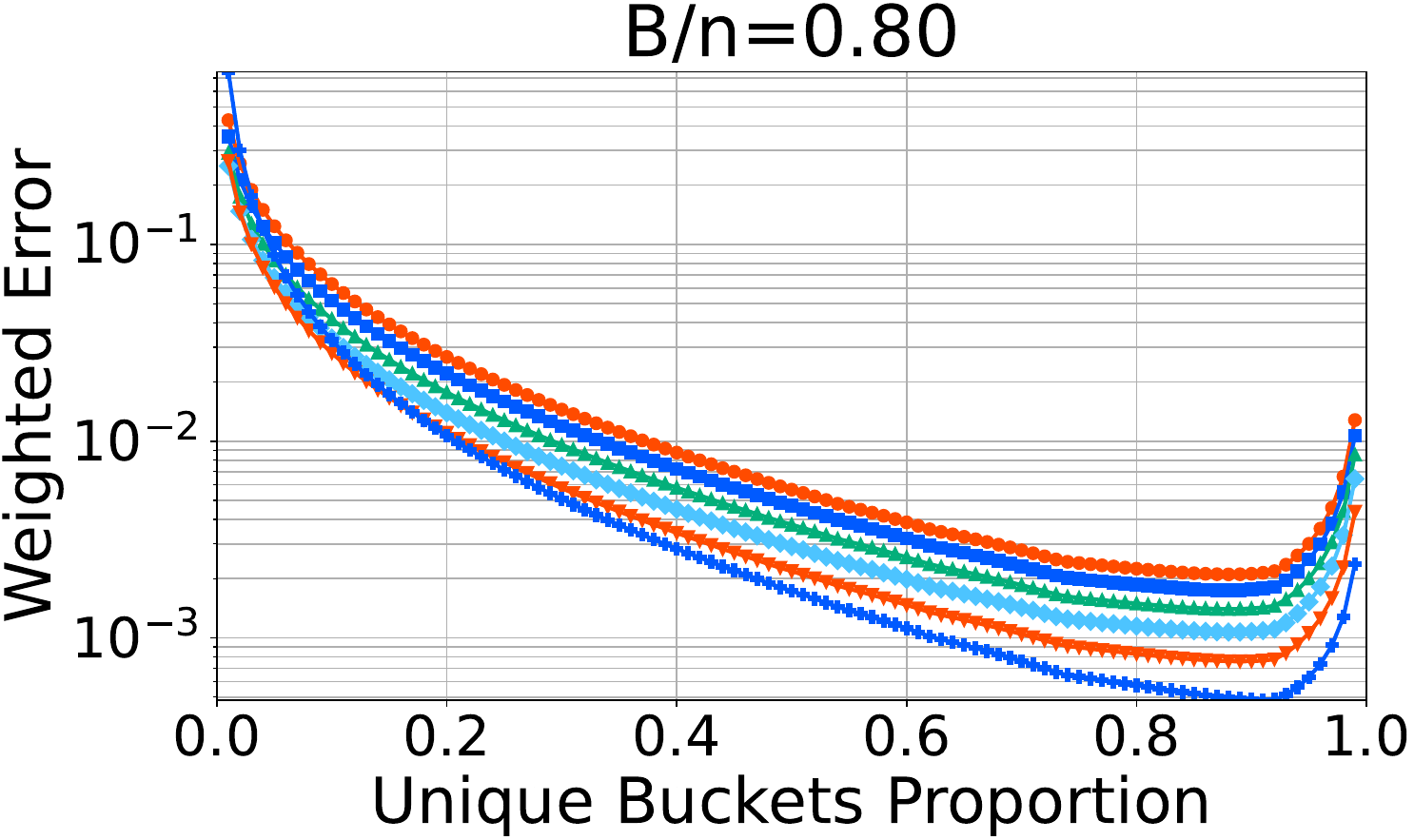}
    \end{minipage}
    \begin{minipage}[tb]{0.12\columnwidth}
        \centering
        \includegraphics[width=\columnwidth]{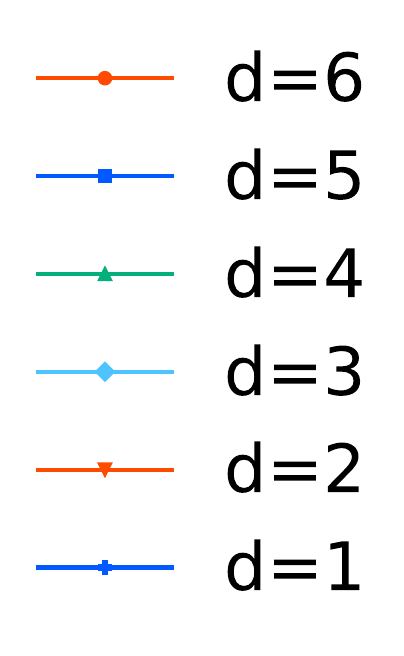}
    \end{minipage}
    \caption{The relationship between $2$ types of parameters and the weighted error on Google Books Ngram Viewer Dataset (Average over $10$ runs)}
    \label{fig:15}
\end{figure}
\subsection{Expected Error of (Learned) Count-Min Sketch}
In general, calculating the expected value of the error defined by \cref{eq:01} is difficult.
However, we can calculate it by imposing the constraint $d=1$.
The reason for limiting $d=1$ will be explained in the next subsection.

For a CMS consisting of a two-dimensional array of size $1 \times w$, the expected error is given as follows:
($h: [n] \to [w]$ is the hash function of the CMS).
\begin{align}
    &\mathbb{E} \left[ \frac{1}{N} \sum_{i = 0}^{n-1} \left|\tilde{f}_i - f_i\right| \cdot f_i \right] 
    = \mathbb{E}\left[ \frac{1}{N} \sum_{i=0}^{n-1} \left( \sum_{\substack{j \neq i \\h(j) = h(i)}} f_j \right) \cdot f_i \right]  \\
    &= \frac{1}{N} \sum_{i=0}^{n-1} \mathbb{E}\left[ \sum_{\substack{j \neq i \\h(j) = h(i)}} f_j \right] \cdot f_i 
    = \frac{1}{wN} \sum_{i=0}^{n-1} \sum_{j \neq i} f_j \cdot f_i \\
    &= \frac{1}{wN} \left( \left( \sum_{i=0}^{n-1} f_i \right)^2 - \sum_{i=0}^{n-1} f_i^2 \right) 
\end{align}

In LCMS, if the length of Unique Buckets is $c$, the elements $0, 1, \dots, c-1$ are counted accurately by Unique Buckets, resulting in no error. 
The remaining elements are where the CMS error occurs.
Therefore, the expected value of the weighted error when the length of the Unique Buckets is $c$ is expressed as follows:
\begin{align}
    \mathbb{E} \left[ \frac{1}{N} \sum_{i = c}^{n-1} \left|\tilde{f}_i - f_i\right| \cdot f_i \right]
    = \frac{1}{(w-c)N} \left( \left( \sum_{i=c}^{n-1} f_i \right)^2 - \sum_{i=c}^{n-1} f_i^2 \right) \label{eq:03}
\end{align}

\subsection{Parameter Determination in an Ideal Machine Learning Model}
\label{ssec:33}
This section describes a parameter optimization method for an ideal machine learning model.
First, let us consider the number of hash functions $d$.
We investigated the relationship between the number of hash functions and the capacity ratio of unique buckets to weighted error, using the AOL Query Logs \cite{pass2006picture}, Wikipedia Pageviews \cite{wikimedia2026pageviews}, and Google Books Ngram Viewer datasets \cite{michel2011quantitative}.
Results in the Google Books Ngram Viewer Dataset are shown in \cref{fig:15}
(Results on the other datasets are shown in the appendix \cref{apd:03}).
In all cases, the minimum weighted error is achieved when $d=1$.
From the above, we conclude that even if we limit the number of hash functions $d$ to $1$, the resulting weighted error does not deteriorate significantly.

Next, we optimize the proportion of the Unique Buckets' capacity, $c/B$.
Since $B$ is a constant, optimizing $c/B$ is equivalent to optimizing $c$, so we consider optimizing $c$ here.
Based on the findings above, we assume the number of hash functions $d$ to be $1$.
From \cref{eq:03}, the expected weighted error of the LCMS $E_c$ when $d=1$ and the elements $0, 1,\dots,c-1$ are assigned to Unique Buckets is as follows.
\begin{equation}
    E_{c} := \frac{1}{(B-c)N} \left( \left( \sum_{i=c}^{n-1} f_i \right)^2 - \sum_{i=c}^{n-1} f_i^2 \right)
\end{equation}

We consider minimizing $E_c$ as $c$ varies.
For general $f_0, f_1, \dots, f_{n-1}$, it is difficult to find the minimum value of $E_c$ more efficiently than by brute-force search.
However, the following property holds when the model is ideal (i.e., $f_0 \geq f_1 \geq \dots \geq f_{n-1}$).
\begin{theorem}
\label{thm:01}
If $f_0 \geq f_1 \geq \dots \geq f_{n-1} > 0$ holds, then $E_c$ has at most one local minimum.
\end{theorem}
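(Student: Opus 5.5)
The plan is to study the sign of the forward difference $E_{c+1}-E_c$ and show that, as $c$ increases, it can change from negative to positive at most once. This means $E_c$ is non-increasing and then non-decreasing, which rules out two separate local minima. Throughout I restrict to the admissible range $0\le c\le \min(n,B-1)$, so that $B-c>0$.

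\textbf{Setup.} Write
\begin{equation*}
S_c=\sum_{i=c}^{n-1} f_i, \qquad P_c=S_c^2-\sum_{i=c}^{n-1} f_i^2=2\sum_{c\le i<j\le n-1} f_if_j, \qquad D_c=B-c,
\end{equation*}
so that $E_c=P_c/(D_cN)$. Two identities follow directly:
\begin{equation*}
P_c-P_{c+1}=2f_cS_{c+1}, \qquad D_{c+1}=D_c-1.
\end{equation*}
Multiplying $E_{c+1}-E_c$ by the positive quantity $N D_c D_{c+1}$ gives
\begin{equation*}
D_cP_{c+1}-(D_c-1)P_c = P_c - D_c\,(P_c-P_{c+1}).
\end{equation*}
Hence $\operatorname{sign}(E_{c+1}-E_c)=\operatorname{sign}(g(c))$, where
\begin{equation*}
g(c):=P_c-2(B-c)\,f_c\,S_{c+1}.
\end{equation*}

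\textbf{Key step: monotonicity of $g$.} I would compute the increment directly:
\begin{equation*}
g(c+1)-g(c) = -2f_cS_{c+1}+2(B-c)f_cS_{c+1}-2(B-c-1)f_{c+1}S_{c+2} = 2(B-c-1)\bigl(f_cS_{c+1}-f_{c+1}S_{c+2}\bigr).
\end{equation*}
The factor $B-c-1$ is positive in the admissible range. The hypothesis $f_c\ge f_{c+1}>0$ is used exactly here. Combined with $S_{c+1}=f_{c+1}+S_{c+2}>S_{c+2}$ (which needs $f_{c+1}>0$), it gives $f_cS_{c+1}>f_{c+1}S_{c+2}$. So $g$ is strictly increasing on the range where the difference is defined.

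\textbf{Conclusion.} Since $g$ is strictly increasing, the sign sequence of $E_{c+1}-E_c$ has the form: negative values, then at most one zero, then positive values. Therefore $E_c$ strictly decreases, possibly has one flat step of two equal values, and then strictly increases. Consequently the (possibly two-point, flat) minimizing location is the only local minimum, which proves \cref{thm:01}. The same structure is what justifies ternary search on $c$.

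\textbf{Expected obstacle.} The main difficulty is bookkeeping at the boundaries. One issue is the endpoint $c=n-1$ or $c=n$, where $S_{c+2}$ or $f_{c+1}$ is an empty sum or zero. I would handle this by using the conventions $S_n=0$ and $f_n=0$. With these conventions the increment is still non-negative, and it is strictly positive wherever the difference is taken within the data. The other issue is the constraint $c<B$. I would also state explicitly that a plateau of two equal values counts as one local minimum.
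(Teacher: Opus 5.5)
Your proof is correct and is essentially the paper's argument. The paper derives the same numerator $N_c - 2(B-c)f_cS_{c+1}$ for $E_{c+1}-E_c$ and rules out any interior $c$ with $E_{c-1}\le E_c\ge E_{c+1}$ by contradiction; that reduces to exactly your inequality $f_{c-1}S_c > f_cS_{c+1}$, which you instead state directly as strict monotonicity of $g$, with more careful handling of the boundary and plateau cases.
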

The proof is provided in the appendix \cref{apd:01}.

The minimum value of a function with at most one local minimum can be calculated more efficiently using a ternary search than a brute-force search \cite{cp_algorithms_ternary_search}. 
Therefore, the optimal parameters can be calculated in $O(n + \log B)$.
More specifically, we perform a ternary search over $c$ to find the minimum value of $E_c$.
$E_c$ can be calculated in $O(1)$ by calculating the cumulative sum beforehand.
The cumulative sum can be computed in $O(n)$.
The ternary search can be done in $O(\log B)$ because $c$ satisfies $ 0 \leq c \leq B$.
Therefore, the total calculation is completed in $O(n + \log B)$.

\subsection{Parameter Determination Algorithm}
Based on the above discussion, we propose the following parameter determination method.
\begin{itemize}
    \item Determine the number of hash functions $d$ to be $1$.
    \item Calculate the proportion of Unique Buckets in $O(n+\log B)$ using the ternary search method shown in \cref{ssec:33}.
\end{itemize}

This method has the following advantages:
\begin{itemize}
    \item Parameter determination is completed rapidly because it does not involve any simulation. The method proposed by Hsu et al. (2019) evaluates various parameter combinations and selects the best one via simulation, making parameter determination time-consuming.
    \item The proposed method is equivalent to testing many Unique Buckets proportions, and for an ideal model, it selects the optimal one. In the method proposed by Hsu et al. (2019), verifying all possible parameter combinations is difficult due to the computational time required for simulation.
\end{itemize}

\section{Experiment}
\subsection{Overview}
To compare the proposed method with the brute-force method, we conducted experiments.
We compared the time required to construct LCMS using Google Books Ngram Viewer Datasets \cite{michel2011quantitative}, AOL Query Logs dataset\cite{pass2006picture}, and the Wikipedia Pageviews dataset\cite{wikimedia2026pageviews}.
We measured the time required to obtain appropriate parameters for multiple values of $B$ and evaluated performance on test data.
We conducted experiments under two conditions: an ideal machine learning model (where $f_0 \geq f_1 \geq \dots \geq f_{n-1}$ holds) and a non-ideal one; general machine learning model.
In this paper, we present only the experimental results obtained using the Google Books Ngram Viewer datasets; experiments using other datasets are described in the appendix \cref{apd:02}.

The official implementation of our method is publicly available\footnote{\url{https://github.com/lev635/Fast-LCMS}}.
Please refer to that for implementation details and more specific experimental conditions.

\subsection{Settings}
\subsubsection{Parameters}
This section describes the parameter combinations verified in both the brute-force and proposed methods.

For the brute-force method, we verified the same parameter values as in the experiment conducted by Hsu et al. (2019) \cite{hsu2019learning}.
In the experiments conducted by Hsu et al. (2019), they tested $36$ parameter combinations in total, with hash functions ranging from $1$ to $4$ and Unique Buckets ratios ranging from $0.1$ to $0.9$ in $9$ steps.\footnote{\url{https://github.com/chenyuhsu/learnedsketch}}
The brute-force method has a time complexity of $O(n)$ because it does not verify the capacity of all possible Unique Buckets.

For the proposed method, the number of hash functions was fixed to $1$, and the Unique Buckets proportion was searched using a ternary search.
In the case of an ideal machine learning model, proposed method yields results equivalent to searching all possible ratios of Unique Buckets.

\subsubsection{Capacity given to Learned Count-Min Sketch}
The capacity $B$ allocated to LCMS is set in $99$ steps, ranging from $0.01$ to $0.99$, based on the number of unique elements $n$ in $\mathcal{S}$, the dataset to be estimated.
For example, if the dataset contains $100,000$ unique elements, the $B$ to be validated will be $1000, 2000, \dots, 99000$, totaling $99$ patterns.

\subsubsection{Dataset}

The Google Books Ngram Viewer Datasets \cite{michel2011quantitative} is a dataset that aggregates word frequencies across books.
The English Version $20200217$ used in this study aggregates $1$-grams to $5$-grams for books published up to $2020$.
In this experiment, we used the $1$-gram dataset. 
We created yearly word-frequency datasets, excluding words longer than 65 characters. 
We used the $2019$ dataset ($29,959,085$ words).

\subsubsection{Machine Learning Model}
We trained a machine learning model to predict word frequency.
To predict word frequency, we adopted the architecture from Hsu et al. (2019) \cite{hsu2019learning}, comprising an embedding layer, a BiLSTM layer, and a fully connected layer. 
Due to the large size of the $2019$ dataset, we trained the model using $3,000,000$ stratified samples per epoch.

\subsection{Results}

\subsubsection{Ideal Machine Learning Model}
\begin{figure}[tb]
    \centering
    \begin{minipage}[tb]{0.48\columnwidth}
        \centering
        \includegraphics[width=\columnwidth]{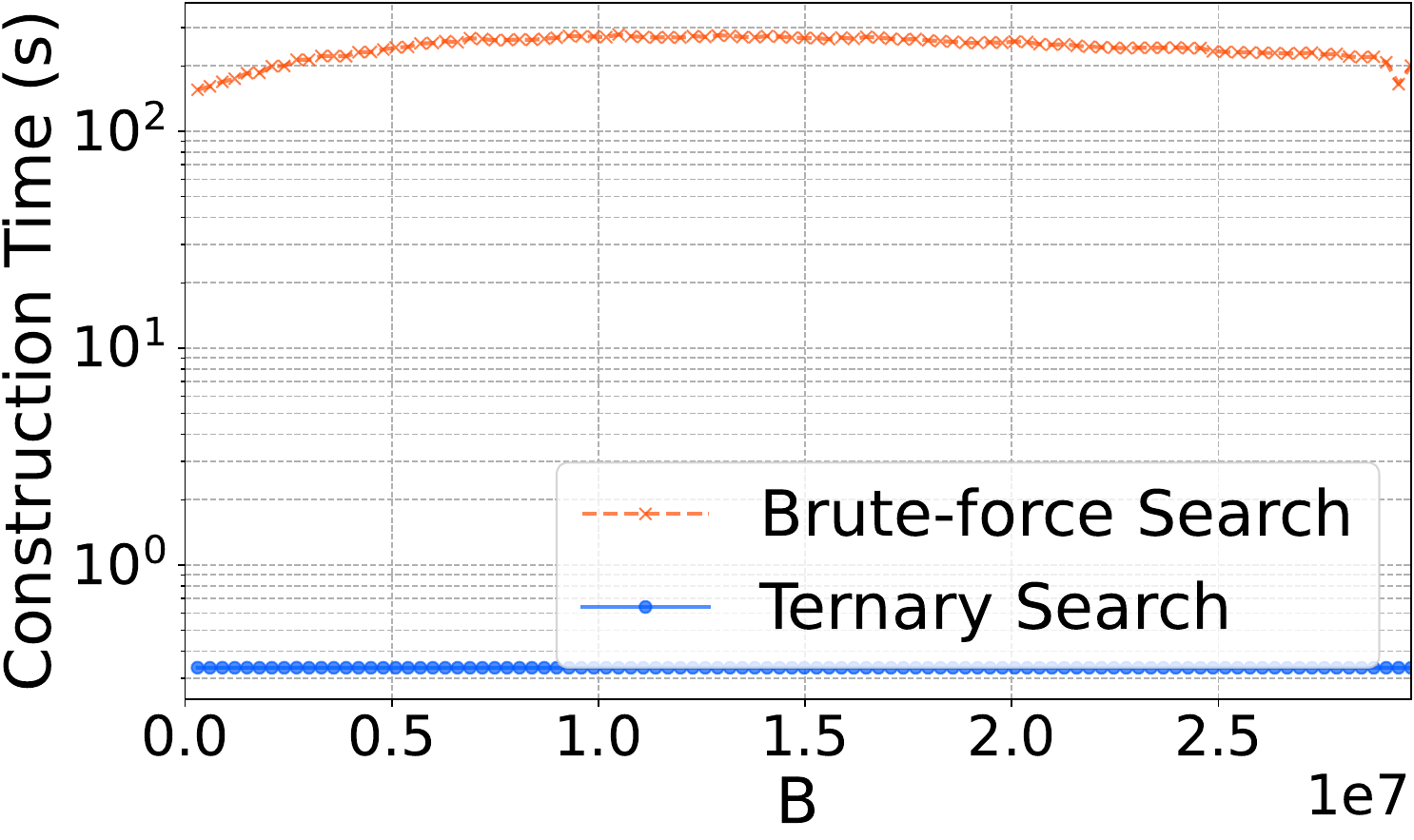}
        \subcaption{Comparison of construction times}
    \end{minipage}
    \begin{minipage}[tb]{0.48\columnwidth}
        \centering
        \includegraphics[width=\columnwidth]{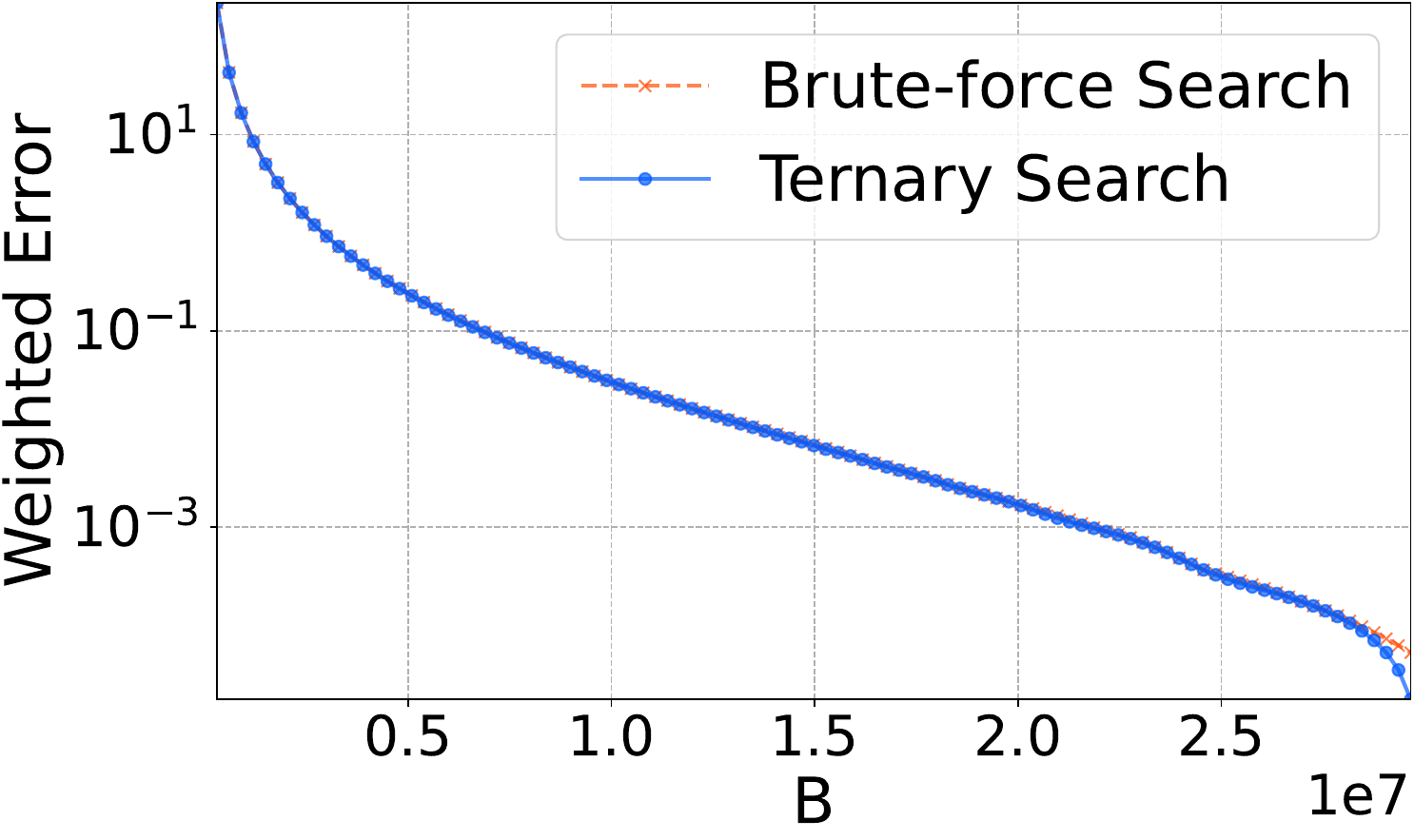}
        \subcaption{Comparison of weighted error}
    \end{minipage}
    \caption{Results on an ideal machine learning model (Google Ngram Viewer)}
    \label{fig:06}
\end{figure}

\cref{fig:06} shows the evaluation results of the proposed method using the $2019$ data under the assumption of an ideal machine learning model. 
Including pre-calculation, our approach accelerated construction time by approximately $729$ times compared to brute-force methods (from an average of $245$ seconds down to $0.335$ seconds). This speedup suggests the dataset could be further expanded. 
Furthermore, our method maintained comparable performance, improving average test data error by $2.74\%$. This confirms our assumption that considering only a $1$-hash function is sufficient.

\subsubsection{General Machine Learning Model}
\begin{figure}[tb]
    \centering
    \begin{minipage}[tb]{0.48\columnwidth}
        \centering
        \includegraphics[width=\columnwidth]{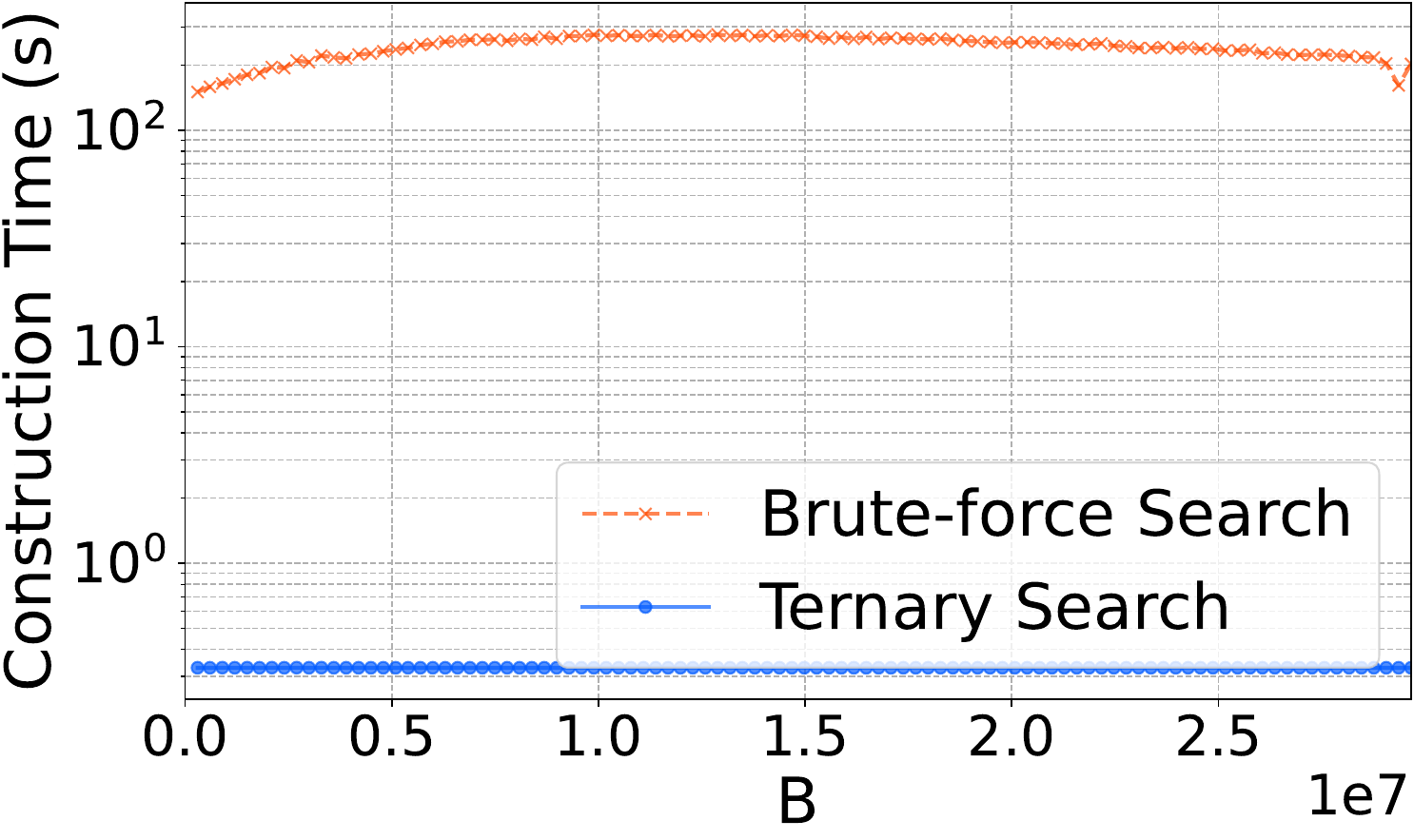}
        \subcaption{Comparison of construction times}
    \end{minipage}
    \begin{minipage}[tb]{0.48\columnwidth}
        \centering
        \includegraphics[width=\columnwidth]{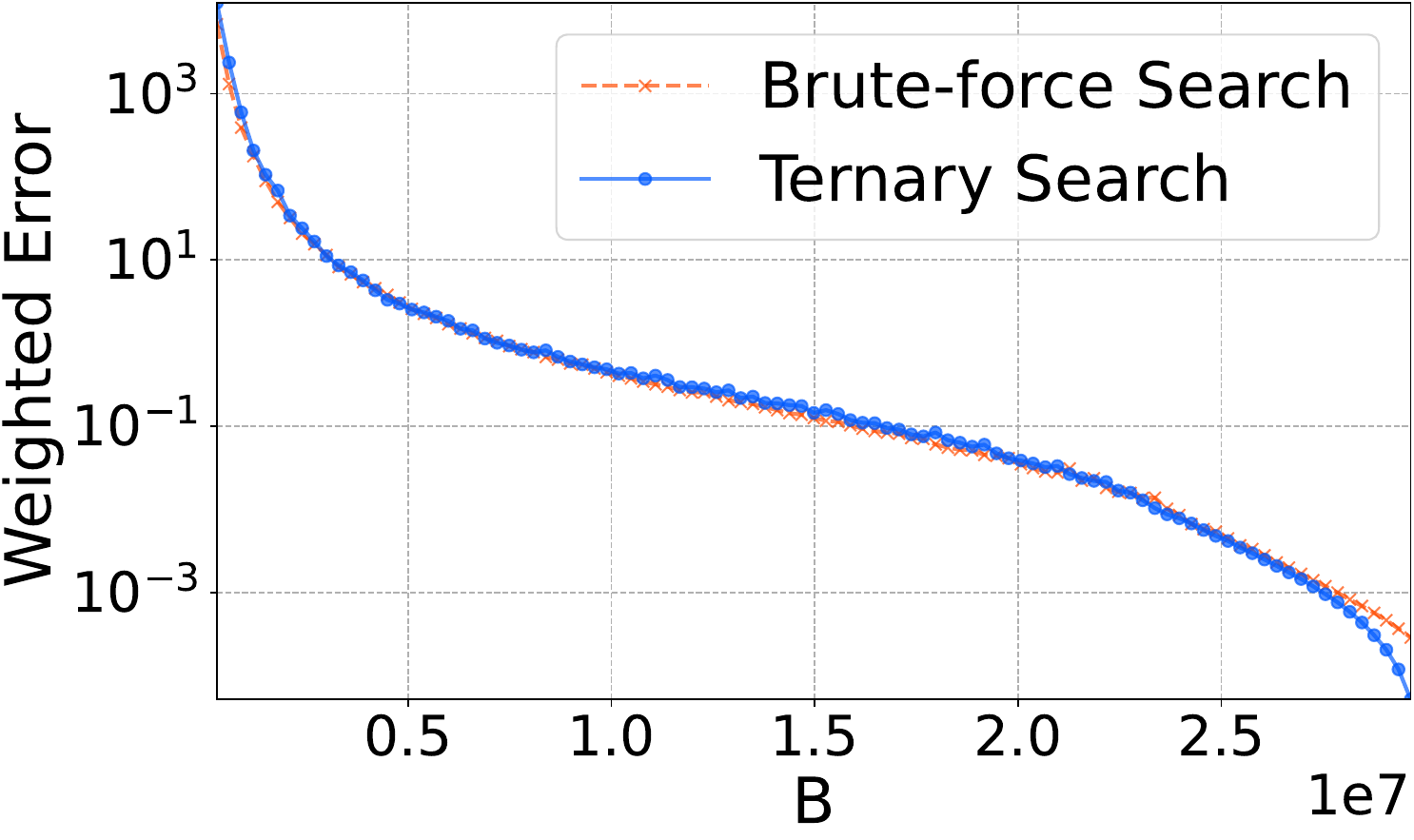}
        \subcaption{Comparison of weighted error}
    \end{minipage}
    \caption{Results on a general machine learning model (Google Ngram Viewer)}
    \label{fig:07}
\end{figure}

\cref{fig:07} presents the corresponding results of the proposed method using a general machine learning model on the $2019$ data. 
Our approach achieved roughly $740$ times faster construction than brute-force methods, while the average test-data error increased by only $5.21\%$. 
Although the expected error for general models has multiple local minima with respect to the capacity of Unique Buckets, the overall trend changes from ``decreasing → increasing''. Thus, applying ternary search poses no practical problems.

\section{Conclusion}
We proposed a ternary search method to efficiently calculate optimal parameters for the original LCMS \cite{hsu2019learning}.
For ideal machine learning models, our method achieved $216$-$729$ times faster computation and reduced the weighted error by $1.43$-$2.74\%$, validating the single-hash-function assumption and the advantage of ternary search.
For non-ideal models, the calculation remained $221$-$740$ times faster, despite an increase in error of up to $5.21\%$. 
Overall, even when strict theoretical assumptions fail in non-ideal models, our method successfully and efficiently finds appropriate parameters.

\subsubsection{Disclosure of Interests.}
The authors declare that they have no competing interests.

\bibliographystyle{splncs04}
\bibliography{bibliography}

@article{aamand2023improved,
  author = {Aamand, Anders and Chen, Justin and Nguyen, Huy and Silwal, Sandeep and Vakilian, Ali},
  title = {Improved frequency estimation algorithms with and without predictions},
  journal = {Advances in Neural Information Processing Systems},
  volume = {36},
  pages = {14387--14399},
  year = {2023}
}

@inproceedings{aghazadeh2018mission,
  author = {Aghazadeh, Amirali and Spring, Ryan and LeJeune, Daniel and Dasarathy, Gautam and Shrivastava, Anshumali and others},
  title = {Mission: ultra large-scale feature selection using count-sketches},
  booktitle = {Proceedings of the 35th International Conference on Machine Learning (ICML)},
  pages = {80--88},
  publisher = {PMLR},
  year = {2018}
}

@article{cheng2023alsketch,
  author = {Cheng, Xiaojun and Jing, Xuyang and Yan, Zheng and Li, Xian and Wang, Pu and Wu, Wei},
  title = {{ALSketch}: an adaptive learning-based sketch for accurate network measurement under dynamic traffic distribution},
  journal = {Journal of Network and Computer Applications},
  volume = {216},
  pages = {103659},
  publisher = {Elsevier},
  year = {2023}
}

@article{cormode2005improved,
  author = {Cormode, Graham and Muthukrishnan, Shan},
  title = {An improved data stream summary: the count-min sketch and its applications},
  journal = {Journal of Algorithms},
  volume = {55},
  number = {1},
  pages = {58--75},
  publisher = {Elsevier},
  year = {2005}
}

@inproceedings{hsu2019learning,
  author = {Hsu, Chen-Yu and Indyk, Piotr and Katabi, Dina and Vakilian, Ali},
  title = {Learning-based frequency estimation algorithms},
  booktitle = {Proceedings of the 7th International Conference on Learning Representations (ICLR)},
  year = {2019}
}

@inproceedings{liu2016one,
  author = {Liu, Zaoxing and Manousis, Antonis and Vorsanger, Gregory and Sekar, Vyas and Braverman, Vladimir},
  title = {One sketch to rule them all: rethinking network flow monitoring with univmon},
  booktitle = {Proceedings of the ACM SIGCOMM 2016 Conference},
  pages = {101--114},
  year = {2016}
}

@article{mitzenmacher2022algorithms,
  author = {Mitzenmacher, Michael and Vassilvitskii, Sergei},
  title = {Algorithms with predictions},
  journal = {Communications of the ACM},
  volume = {65},
  number = {7},
  pages = {33--35},
  publisher = {ACM},
  year = {2022}
}

@inproceedings{pass2006picture,
  author = {Pass, Greg and Chowdhury, Abdur and Torgeson, Cayley},
  title = {A picture of search},
  booktitle = {Proceedings of the 1st International Conference on Scalable Information Systems (InfoScale '06)},
  publisher = {ACM},
  doi = {10.1145/1146847.1146848},
  year = {2006}
}

@misc{cp_algorithms_ternary_search,
  author = {{CP-Algorithms contributors}},
  title = {Ternary search},
  url = {https://cp-algorithms.com/num_methods/ternary_search.html},
  note = {Accessed: 2026-02-02},
  year = {2025}
}

@article{michel2011quantitative,
  author = {Michel, Jean-Baptiste and Shen, Yuan Kui and Aiden, Aviva Presser and Veres, Adrian and Gray, Matthew K and Google Books Team and Pickett, Joseph P and Hoiberg, Dale and Clancy, Dan and Norvig, Peter and others},
  title = {Quantitative analysis of culture using millions of digitized books},
  journal = {Science},
  volume = {331},
  number = {6014},
  pages = {176--182},
  publisher = {American Association for the Advancement of Science},
  year = {2011}
}

@misc{wikimedia2026pageviews,
  author = {{Wikimedia Foundation}},
  title = {Wikipedia pageview data dumps},
  year = {2026},
  url = {https://dumps.wikimedia.org/other/pageviews/},
  note = {Accessed: 2026-04-07}
}

\appendix
\section{Proof of \cref{thm:01}}
\label{apd:01}
\begin{proof}
If the condition where $E_{c-1} \leq E_c$ and $E_c \geq E_{c+1}$ are simultaneously satisfied is never met for any $c \in \{1, 2, \dots, B-2\}$, then $E_c$ possesses at most one local minimum. Accordingly, we proceed to prove this assertion. We employ a proof by contradiction. We assume that there exists a $c$ such that both $E_{c-1} \leq E_c$ and $E_c \geq E_{c+1}$ hold, and derive a contradiction.

We restate the definition of $E_c$:
\begin{equation}
    E_{c} := \frac{1}{(B-c)N} \left( \left( \sum_{i=c}^{n-1} f_i \right)^2 - \sum_{i=c}^{n-1} f_i^2 \right)
\end{equation}
For the sake of simplicity, we define $N_c$ and $S_c$ as follows:
\begin{align}
    N_c &:= \left( \left( \sum_{i=c}^{n-1} f_i \right)^2 - \sum_{i=c}^{n-1} f_i^2 \right) \\
    S_c &:= \sum_{i=c}^{n-1} f_i
\end{align}

For a general $c$, $E_{c+1}$ can be expressed as follows:
\begin{align}
     E_{c+1}
    &= \frac{1}{(B-c-1)N} \left[ \left( \sum_{i=c+1}^{n-1} f_i \right)^2 - \sum_{i=c+1}^{n-1} f_i^2 \right] \\
    &= \frac{1}{(B-c-1)N} \left[ \left( \sum_{i=c}^{n-1} f_i \right)^2 - \sum_{i=c}^{n-1} f_i^2 - 2 f_c \sum_{i=c+1}^{n-1} f_i \right] \\
    &= \frac{N_c - 2 f_c S_{c+1}}{(B-c-1)N}
\end{align}

Therefore,
\begin{align}
    E_{c+1} - E_c
    &= \frac{N_c - 2 (B-c) f_c S_{c+1} }{(B-c-1)(B-c)N} 
\end{align}

That is, based on the assumption, both of the following hold:
\begin{align}
    E_{c-1} \leq E_c &\Leftrightarrow N_{c-1} - 2(B - c + 1) f_{c-1} S_c \geq 0 \label{eq:81} \\
    E_{c} \geq E_{c+1} &\Leftrightarrow N_c - 2(B - c) f_c S_{c+1} \label{eq:82} \leq 0
\end{align}

We further manipulate \cref{eq:81}.
\begin{align}
    & N_{c-1} - 2(B - c + 1) f_{c-1} S_c \geq 0 \\
    &\Leftrightarrow N_c - 2 (B - c) f_{c-1} S_c \geq 0 \label{eq:83}
\end{align}

Since \cref{eq:82} and \cref{eq:83} hold simultaneously,
\begin{align}
    2 (B - c) f_{c-1} S_c \leq 2 (B - c) f_c S_{c+1}
    &\Leftrightarrow f_{c-1} S_c \leq f_c S_{c+1} \\
    &\Leftrightarrow f_{c-1} (f_c + S_{c+1}) \leq f_c S_{c+1} \\
    &\Leftrightarrow f_{c-1} f_c \leq S_c (f_c - f_{c-1}) \label{eq:84}
\end{align}

In \cref{eq:84}, the left-hand side is strictly positive since $f_{c-1} > 0$ and $f_c > 0$, whereas the right-hand side is non-positive because $f_c \leq f_{c-1}$.This yields a contradiction. Thus, the theorem is proven.

\end{proof}

\section{Experimental Results on Other Datasets}
\label{apd:02}
\subsection{AOL Query Logs dataset}
\subsubsection{Dataset}
AOL Query Log dataset\cite{pass2006picture} is a dataset that aggregates $21$ million searches performed by $650,000$ users over a $3$-month period.
Hsu et al.'s experiment used a dataset that aggregated word-search relationships daily; this study also used a similar dataset.
The experiment used data for $92$ days, from day $0$ to day $91$.
The average number of unique elements over the $92$-day period is approximately $170000$.
We used data from day $0$ to day $6$ to train the machine learning models, and data from day $20$ and day $50$ to evaluate the methods.

\subsubsection{Machine Learning Model}
For the machine learning model, we used a structure similar to that of Hsu et al. (2019) \cite{hsu2019learning}.
The model estimates the number of times a search term has been searched, taking it as input. It consists of an RNN using LSTM cells and a fully connected layer.
To input the search term into the RNN, we need to generate a character embedding using the same one as in Hsu et al. (2019).

\subsubsection{Results}
\cref{fig:08} and \cref{fig:09} show the results of our proposed method using the ideal and general machine learning models, respectively. 
Compared to brute-force methods, our approach achieved roughly $216$x and $221$x faster construction times, while improving average test errors by $2.09\%$ and $2.39\%$ with almost the same performance.

\begin{figure}[tb]
    \centering
    \begin{minipage}[tb]{0.48\columnwidth}
        \centering
        \includegraphics[width=\columnwidth]{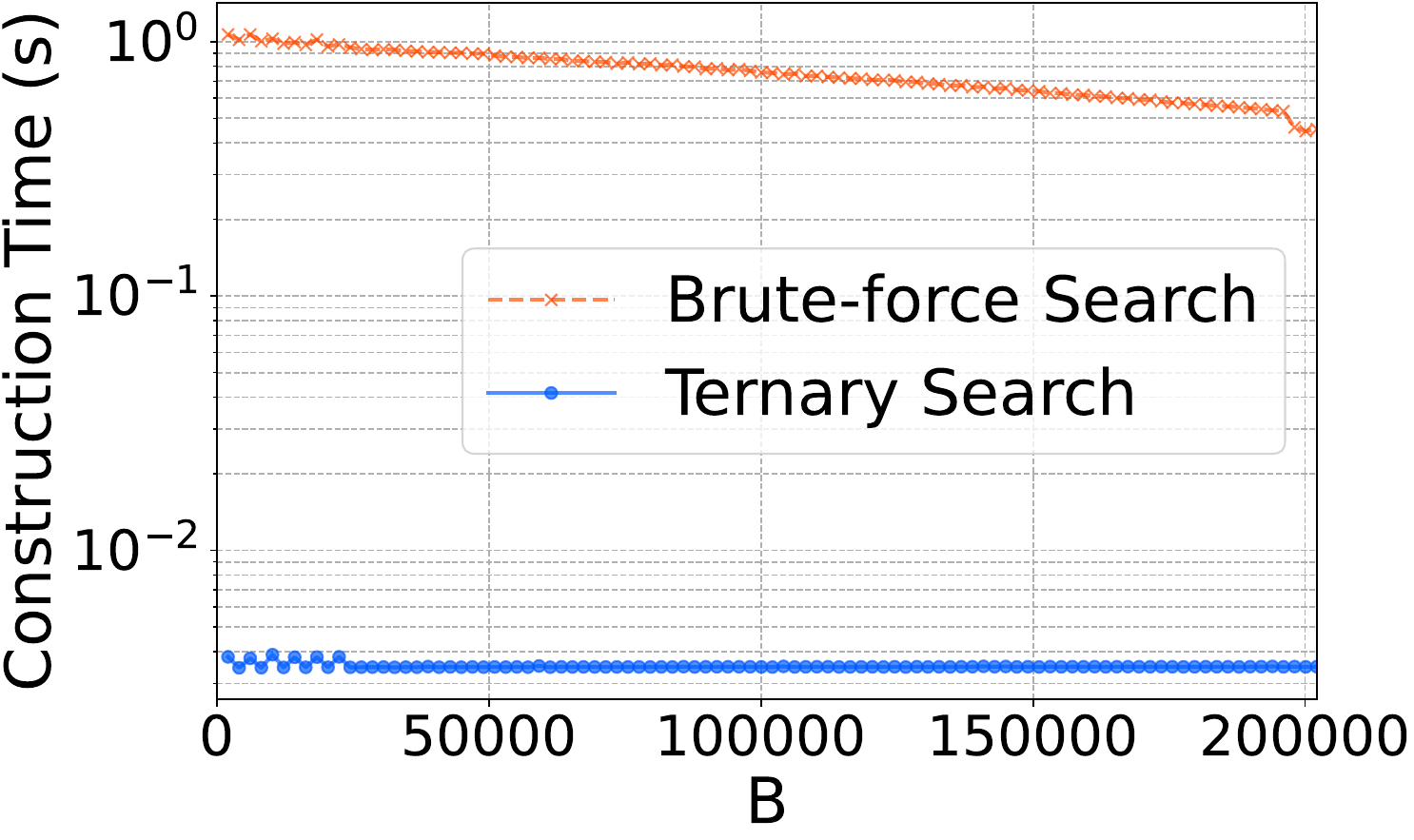}
        \subcaption{Comparison of construction times}
    \end{minipage}
    \begin{minipage}[tb]{0.48\columnwidth}
        \centering
        \includegraphics[width=\columnwidth]{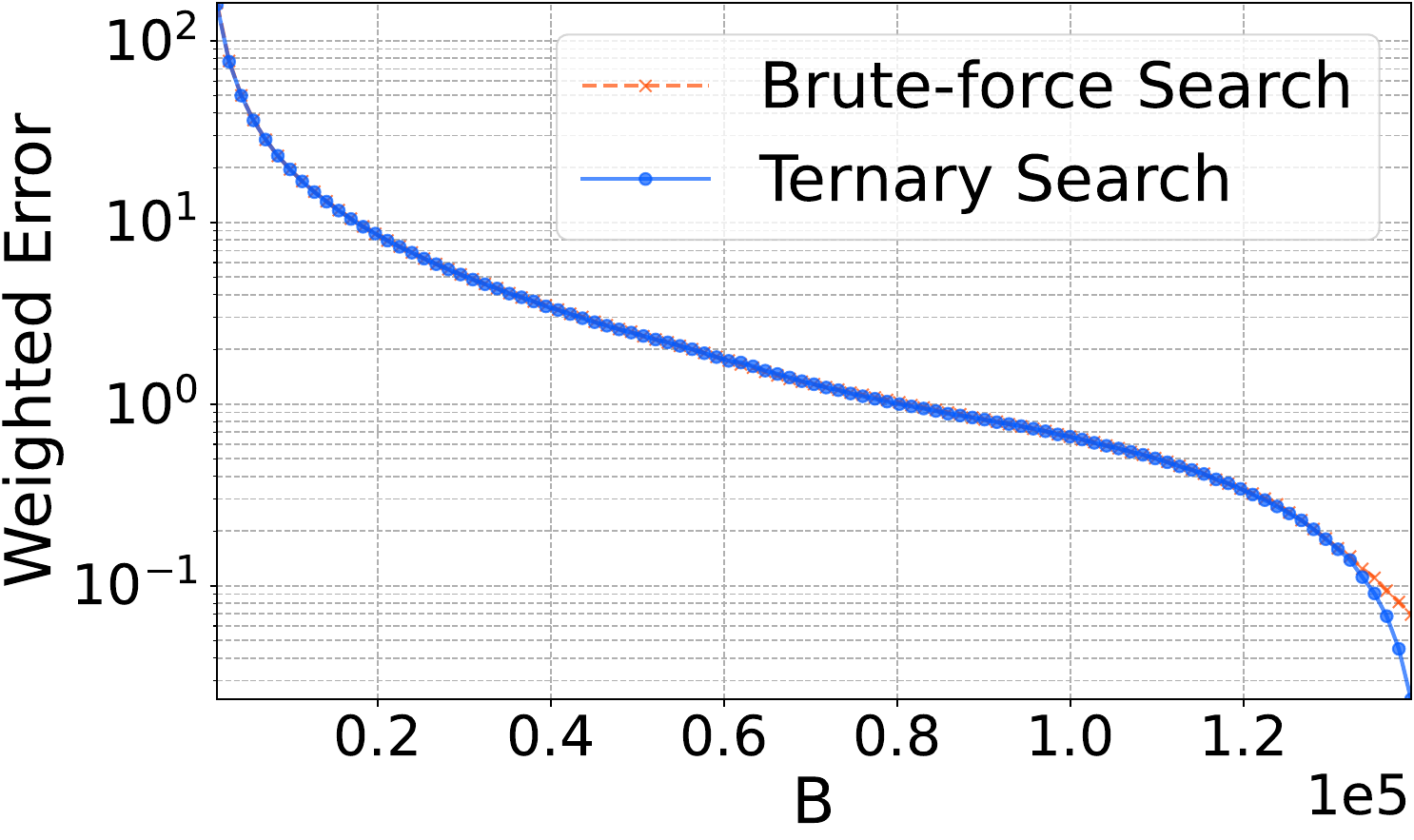}
        \subcaption{Comparison of weighted error}
    \end{minipage}
    \caption{Results on an ideal machine learning model (AOL Query Logs)}
    \label{fig:08}
\end{figure}
\begin{figure}[tb]
    \centering
    \begin{minipage}[tb]{0.48\columnwidth}
        \centering
        \includegraphics[width=\columnwidth]{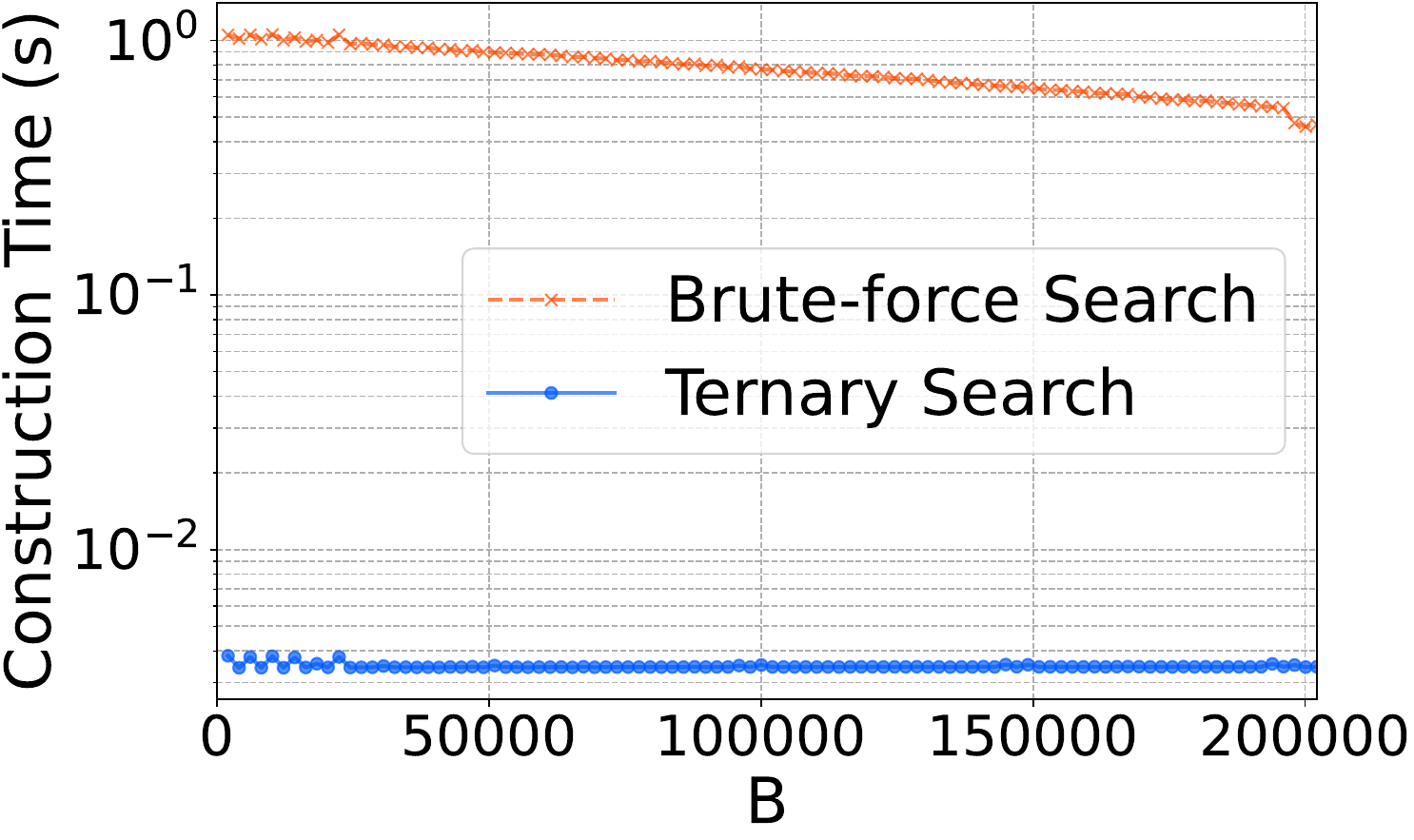}
        \subcaption{Comparison of construction times}
    \end{minipage}
    \begin{minipage}[tb]{0.48\columnwidth}
        \centering
        \includegraphics[width=\columnwidth]{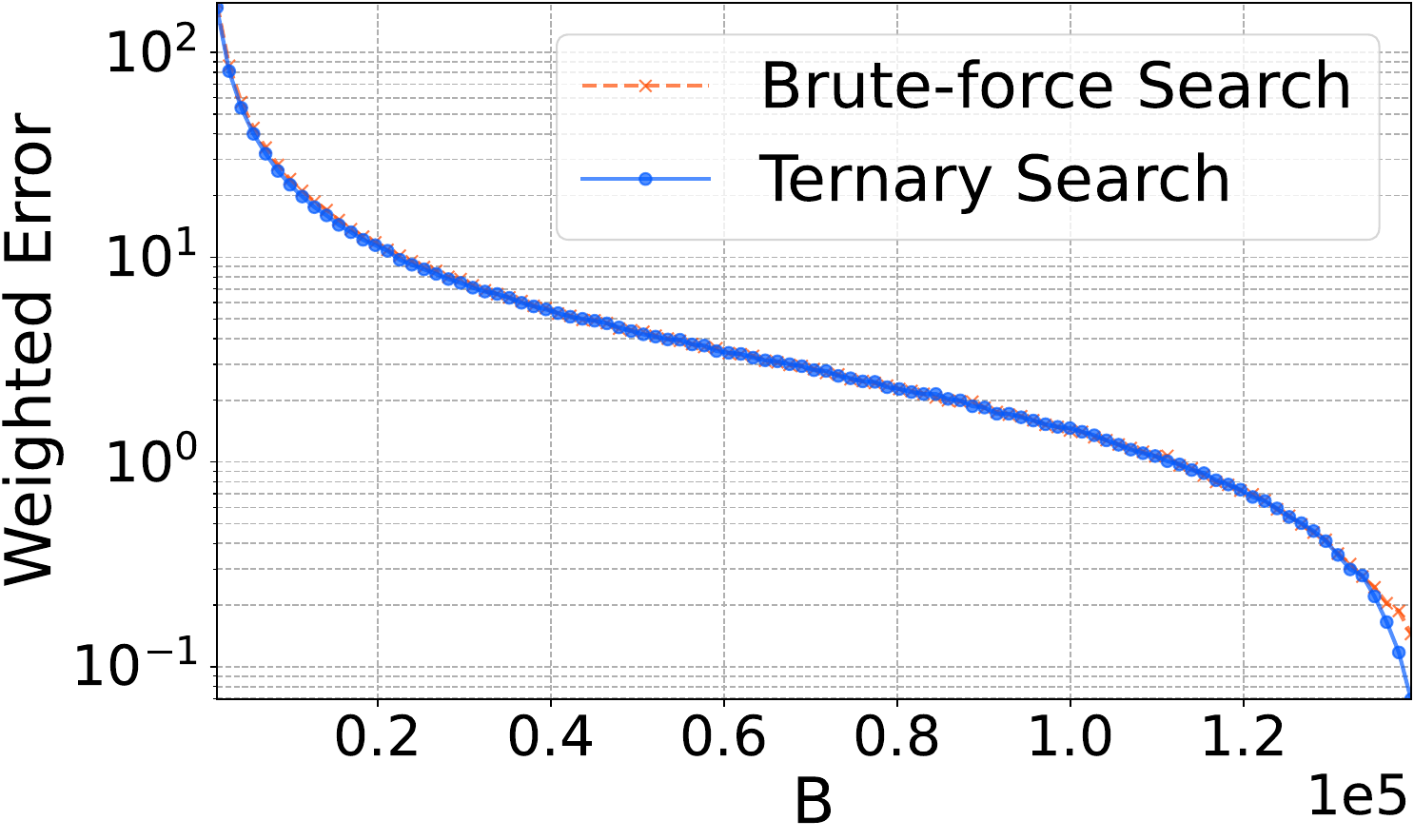}
        \subcaption{Comparison of weighted error}
    \end{minipage}
    \caption{Results on a general machine learning model (AOL Query Logs)}
    \label{fig:09}
\end{figure}

\subsection{Wikipedia Pageviews Dataset}
\subsubsection{Dataset}
The Wikipedia Pageviews Dataset \cite{wikimedia2026pageviews} records the number of page views for each Wikipedia page from $2015$ to the present.
The number of views is aggregated every hour, with one dataset per hour.

In this study, only pages where the \texttt{page\_title} consisted of ASCII characters and the \texttt{domain\_code} was \texttt{en} were used in the experiment.
In this study, we used datasets from January $1$ and $2$, $2020$.

\subsubsection{Machine Learning Model}
We trained a machine learning model to predict page views from the page title.
Based on the machine learning model structure adopted by Hsu et al. (2019) \cite{hsu2019learning}, a model comprising an embedding layer, an LSTM layer, and a fully connected layer was used.
Training was performed using probability sampling weighted by page-view counts.

\subsubsection{Results}
\begin{figure}[tb]
    \centering
    \begin{minipage}[tb]{0.48\columnwidth}
        \centering
        \includegraphics[width=\columnwidth]{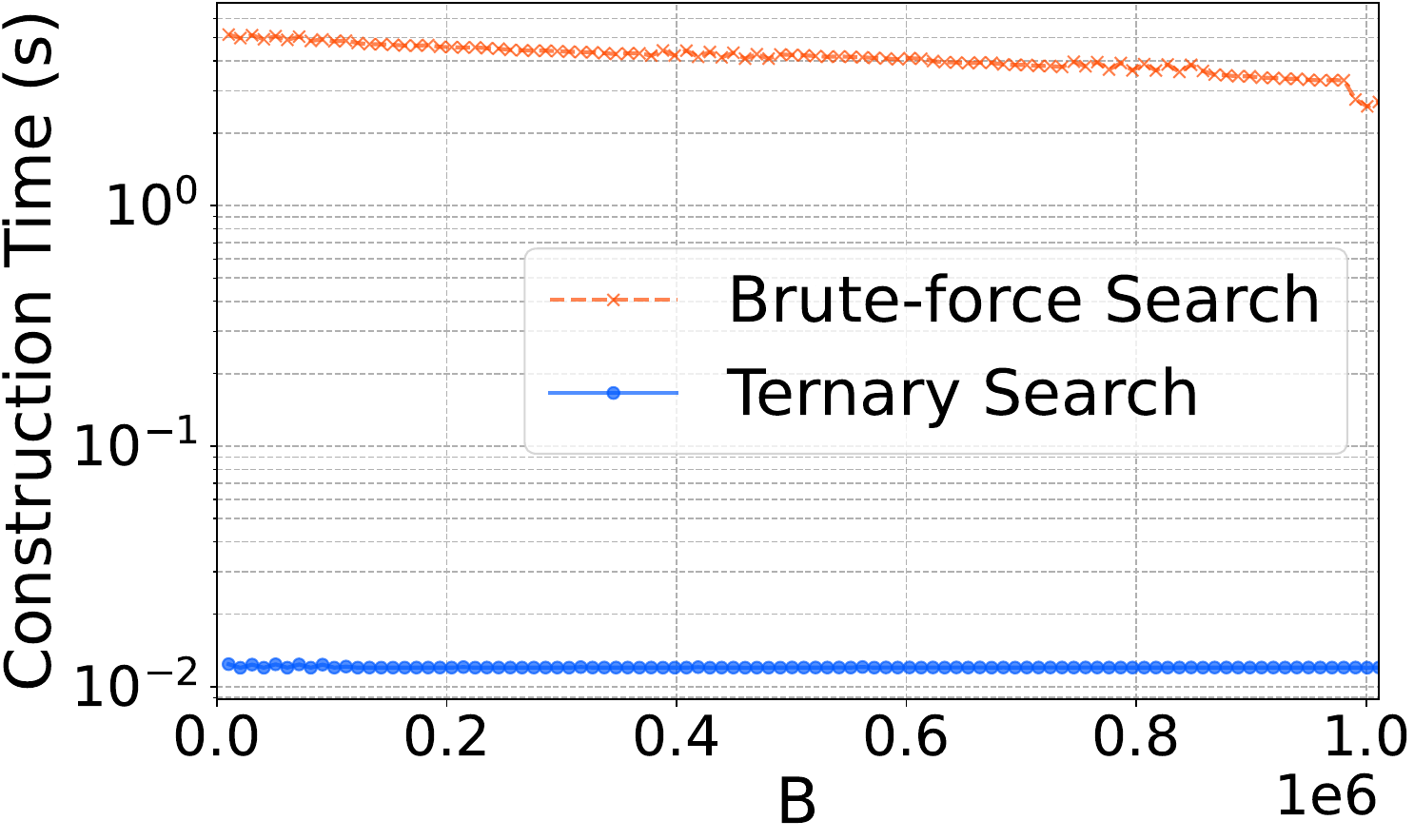}
        \subcaption{Comparison of construction times}
    \end{minipage}
    \begin{minipage}[tb]{0.48\columnwidth}
        \centering
        \includegraphics[width=\columnwidth]{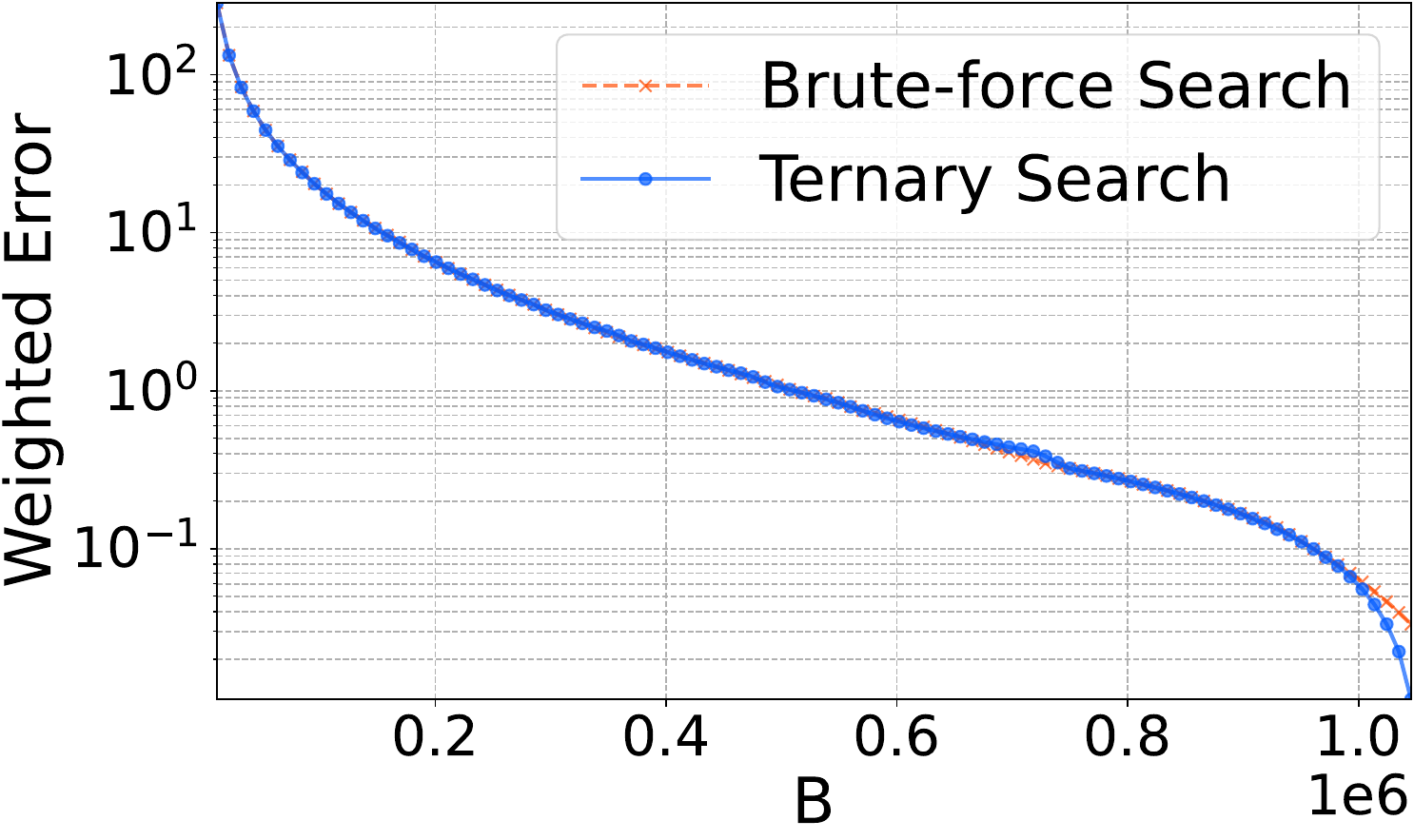}
        \subcaption{Comparison of weighted error}
    \end{minipage}
    \caption{Results on an ideal machine learning model (Wikipedia Pageviews)}
    \label{fig:10}
\end{figure}
\begin{figure}[tb]
    \centering
    \begin{minipage}[tb]{0.48\columnwidth}
        \centering
        \includegraphics[width=\columnwidth]{figures/lcms_construction_speed/wikipedia_ideal_time.pdf}
        \subcaption{Comparison of construction times}
    \end{minipage}
    \begin{minipage}[tb]{0.48\columnwidth}
        \centering
        \includegraphics[width=\columnwidth]{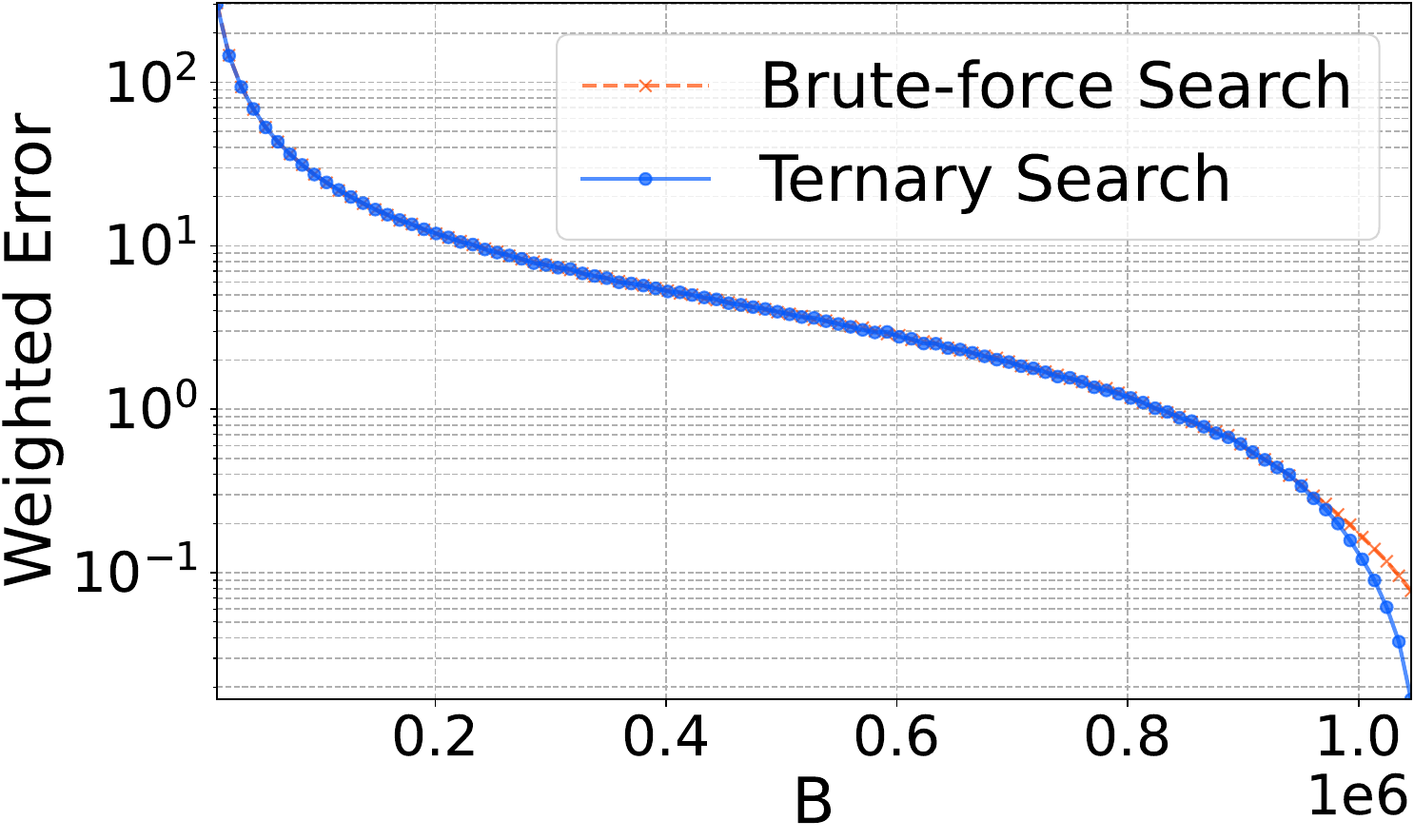}
        \subcaption{Comparison of weighted error}
    \end{minipage}
    \caption{Results on a general machine learning model (Wikipedia Pageviews)}
    \label{fig:11}
\end{figure}

\cref{fig:10} and \cref{fig:11} show the evaluation results of our proposed method using the ideal and general machine learning models, respectively. 
Compared to brute-force methods, our approach achieved roughly $343$x and $390$x faster construction times, and improved test data errors by $1.43$ and $3.18$ points, respectively.

\section{The Relationship between $2$ types of Parameters and Weighted Error}
\label{apd:03}
\subsection{AOL Query Logs dataset}
The experimental results obtained from the AOL Query Log dataset \cite{pass2006picture} are shown in \cref{fig:20}.
\begin{figure}[tb]
    \centering
    \begin{minipage}[tb]{0.48\columnwidth}
        \centering
        \includegraphics[width=\columnwidth]{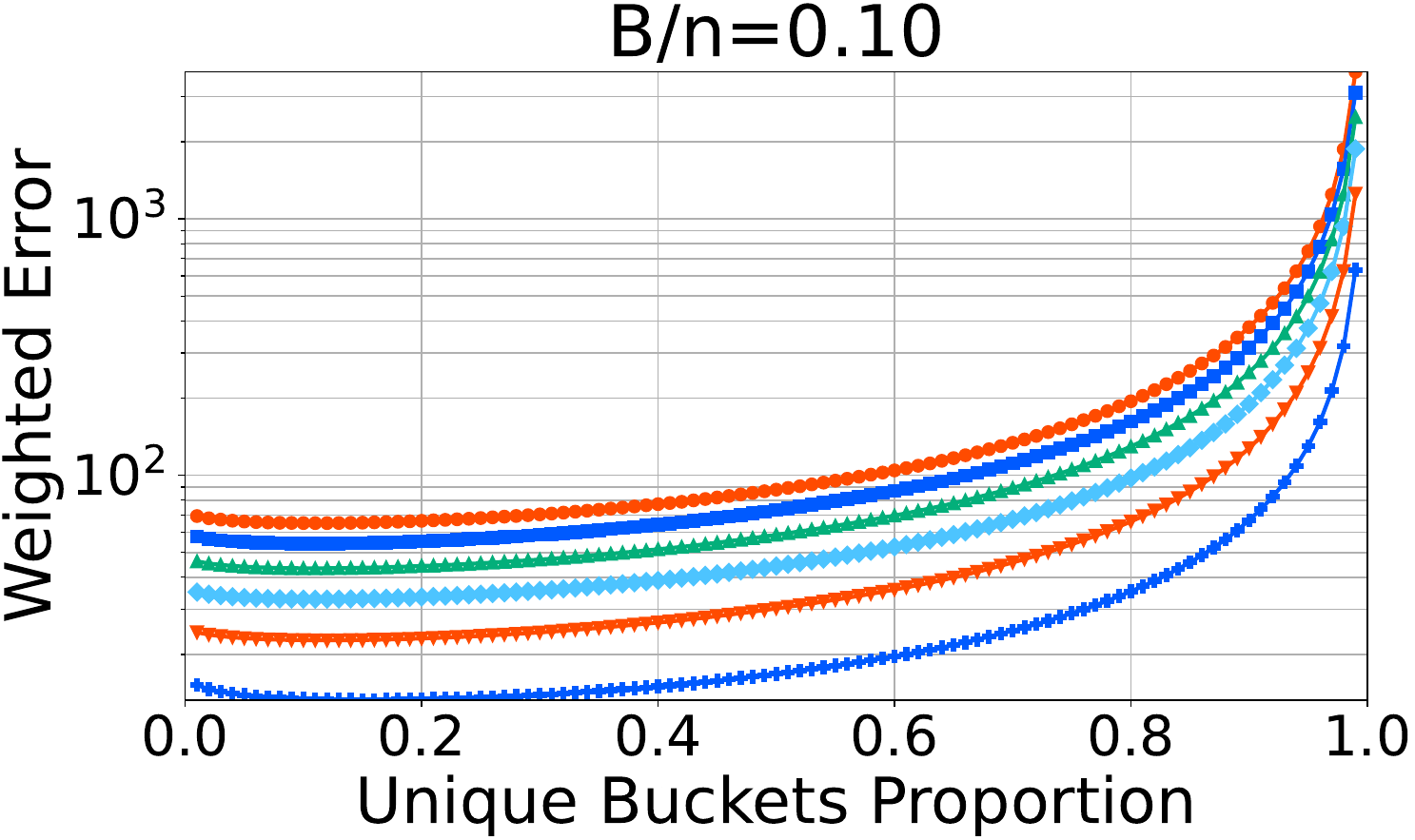}
    \end{minipage}
    \begin{minipage}[tb]{0.48\columnwidth}
        \centering
        \includegraphics[width=\columnwidth]{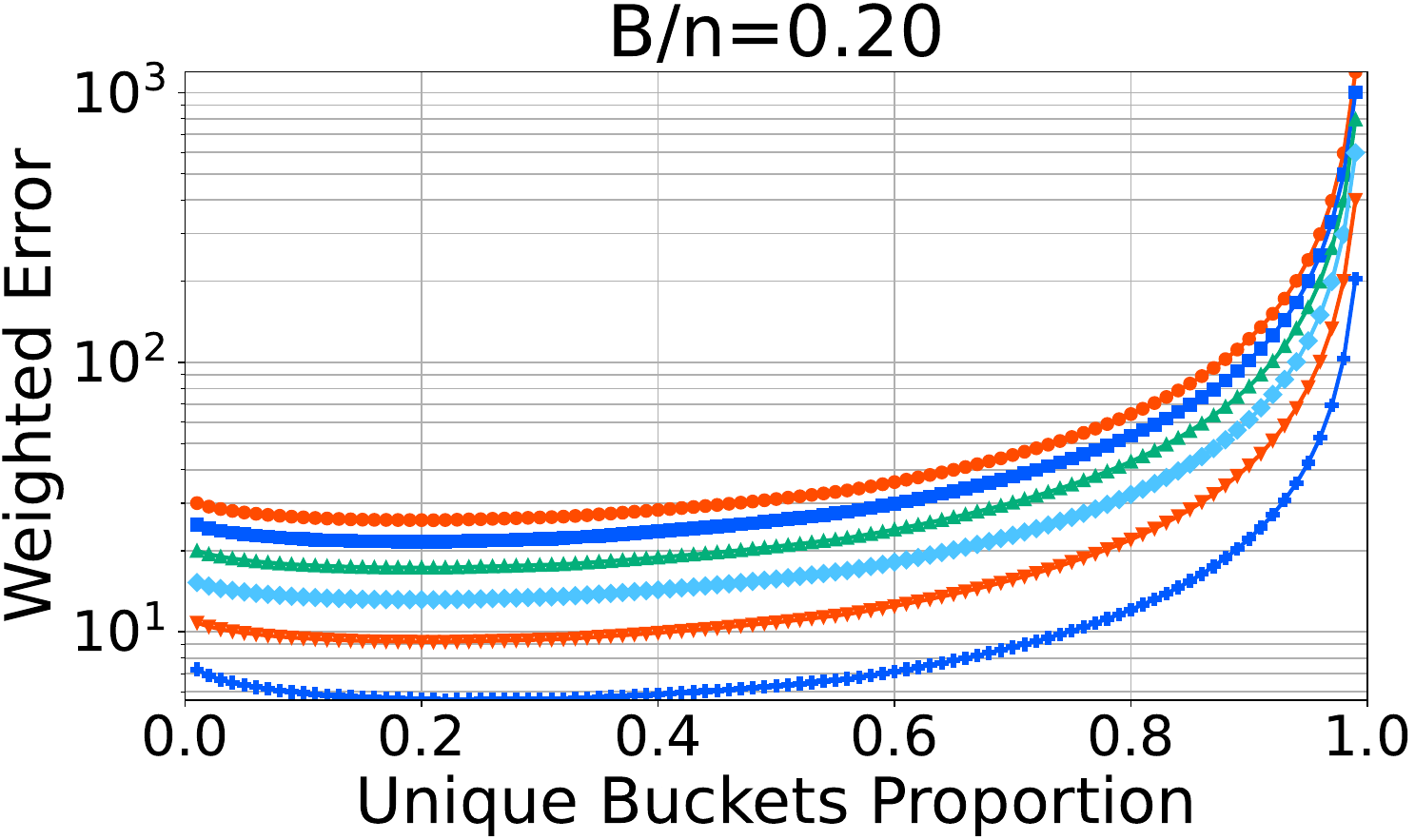}
    \end{minipage}
    \begin{minipage}[tb]{0.48\columnwidth}
        \centering
        \includegraphics[width=\columnwidth]{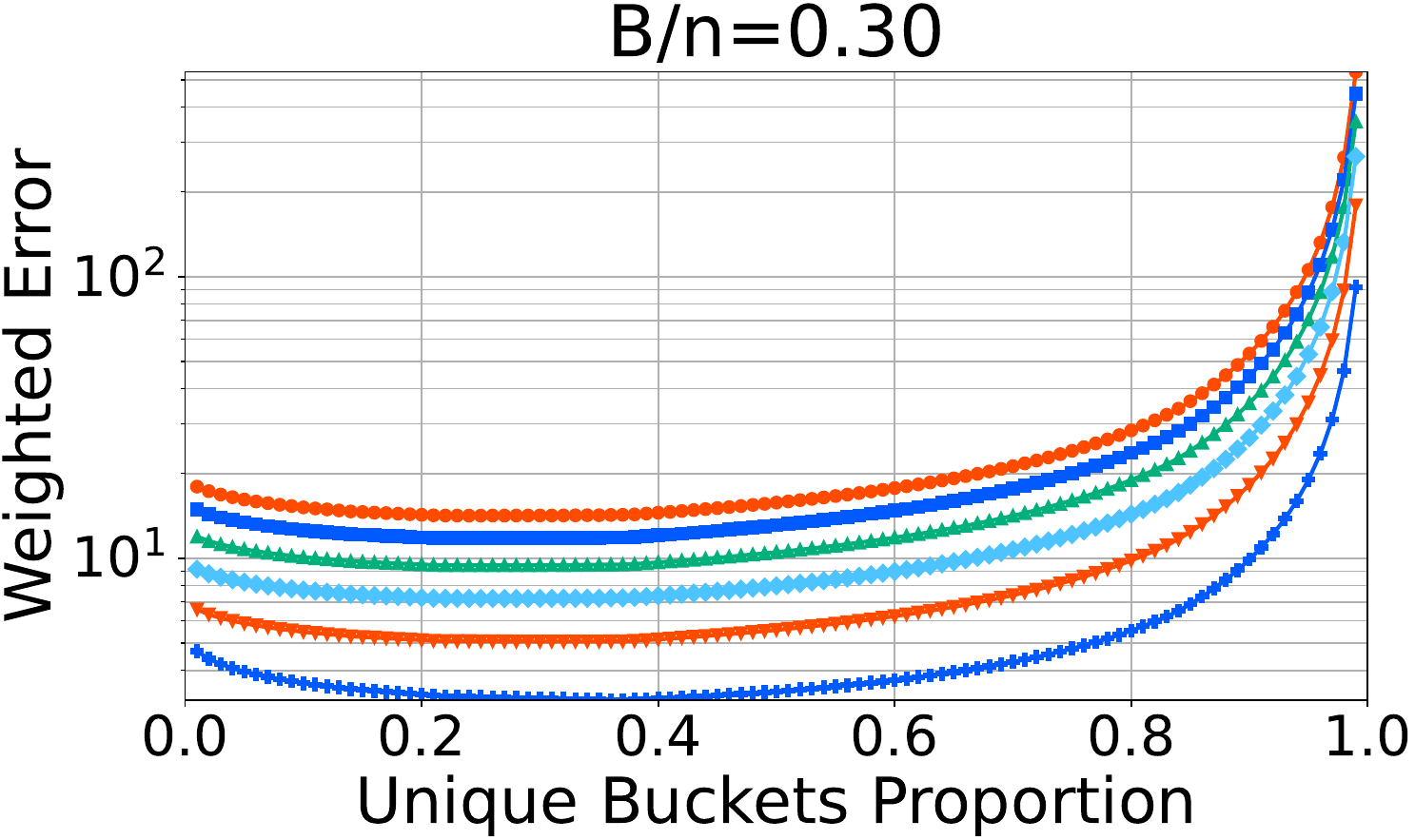}
    \end{minipage}
    \begin{minipage}[tb]{0.48\columnwidth}
        \centering
        \includegraphics[width=\columnwidth]{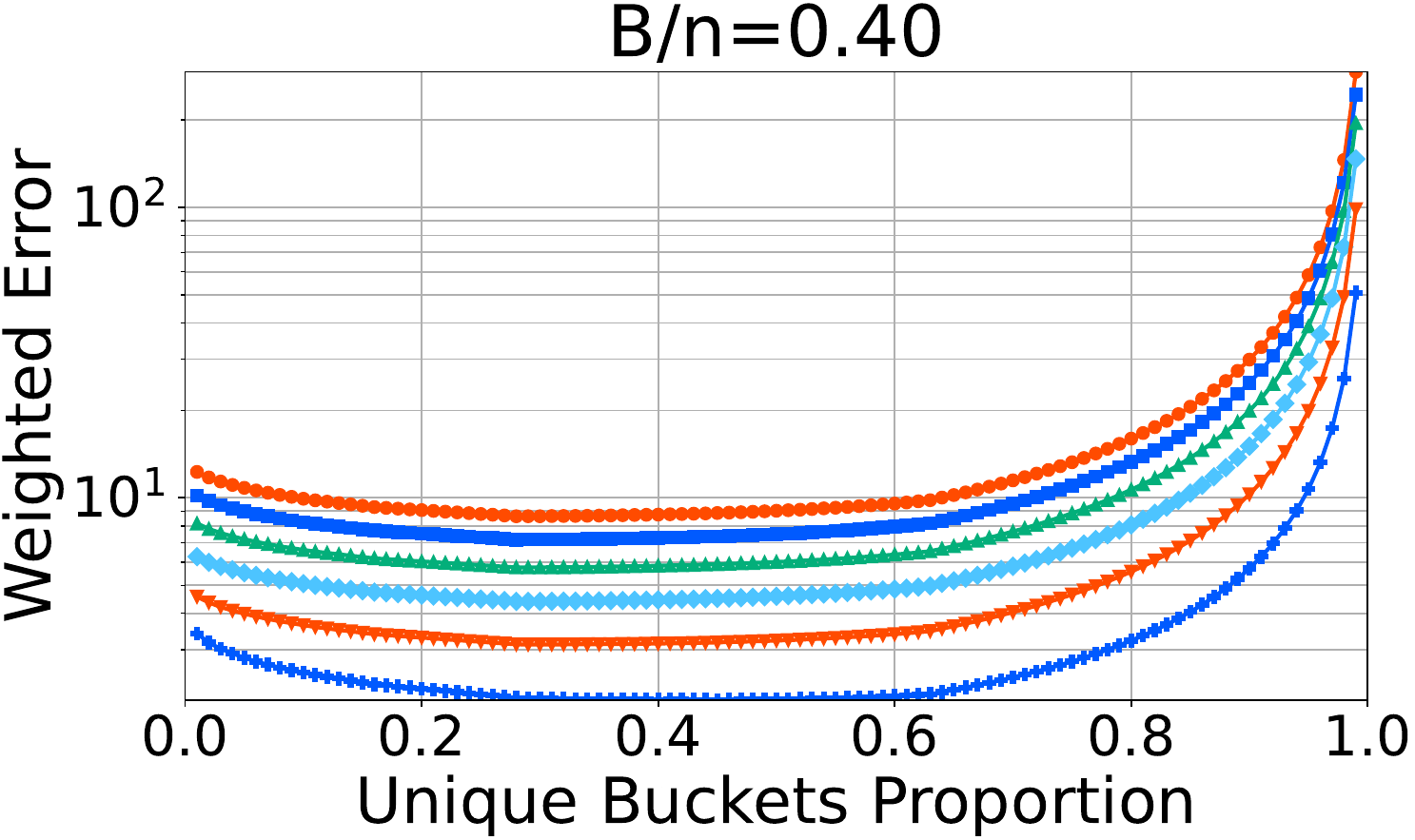}
    \end{minipage}
    \begin{minipage}[tb]{0.48\columnwidth}
        \centering
        \includegraphics[width=\columnwidth]{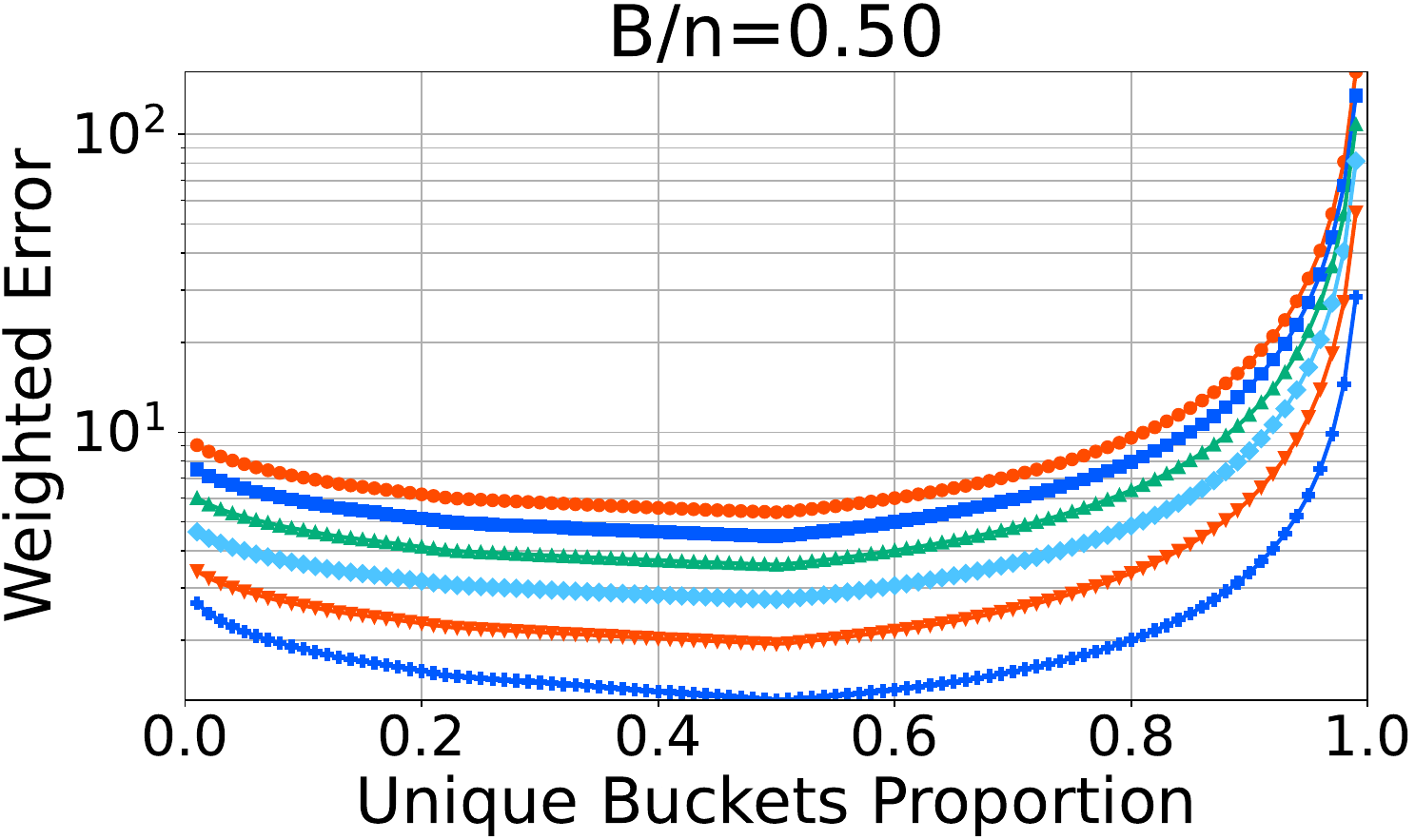}
    \end{minipage}
    \begin{minipage}[tb]{0.48\columnwidth}
        \centering
        \includegraphics[width=\columnwidth]{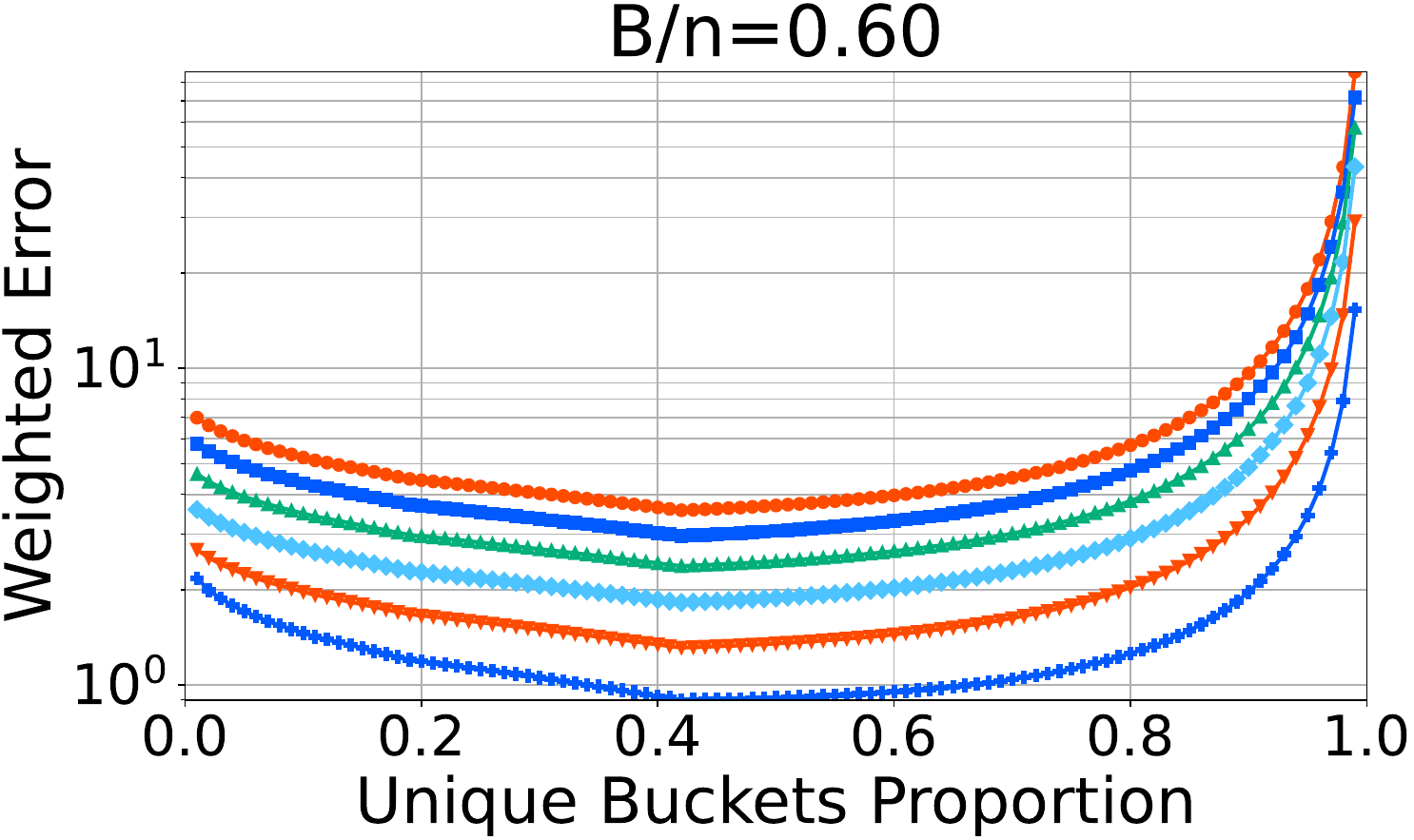}
    \end{minipage}
    \begin{minipage}[tb]{0.48\columnwidth}
        \centering
        \includegraphics[width=\columnwidth]{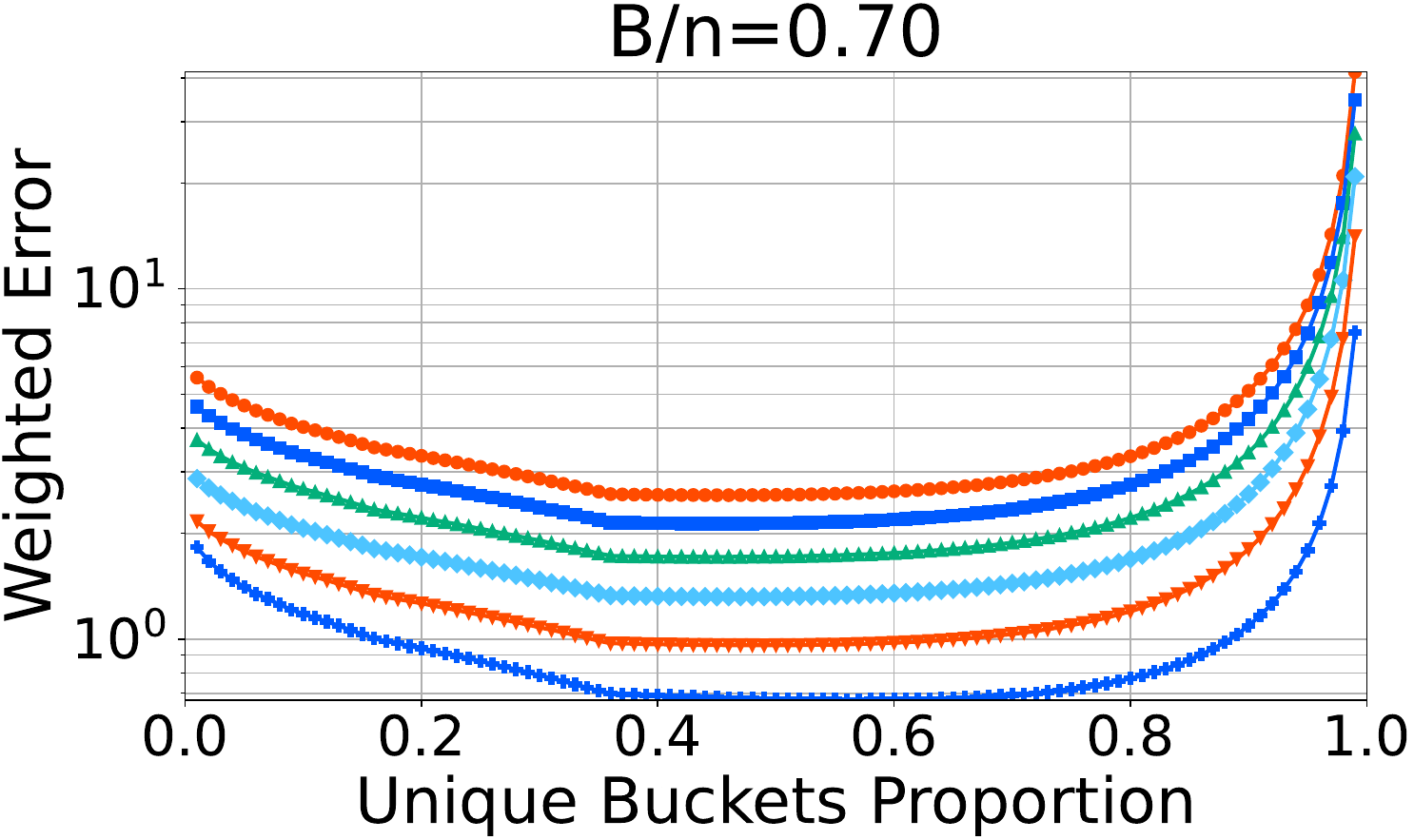}
    \end{minipage}
    \begin{minipage}[tb]{0.48\columnwidth}
        \centering
        \includegraphics[width=\columnwidth]{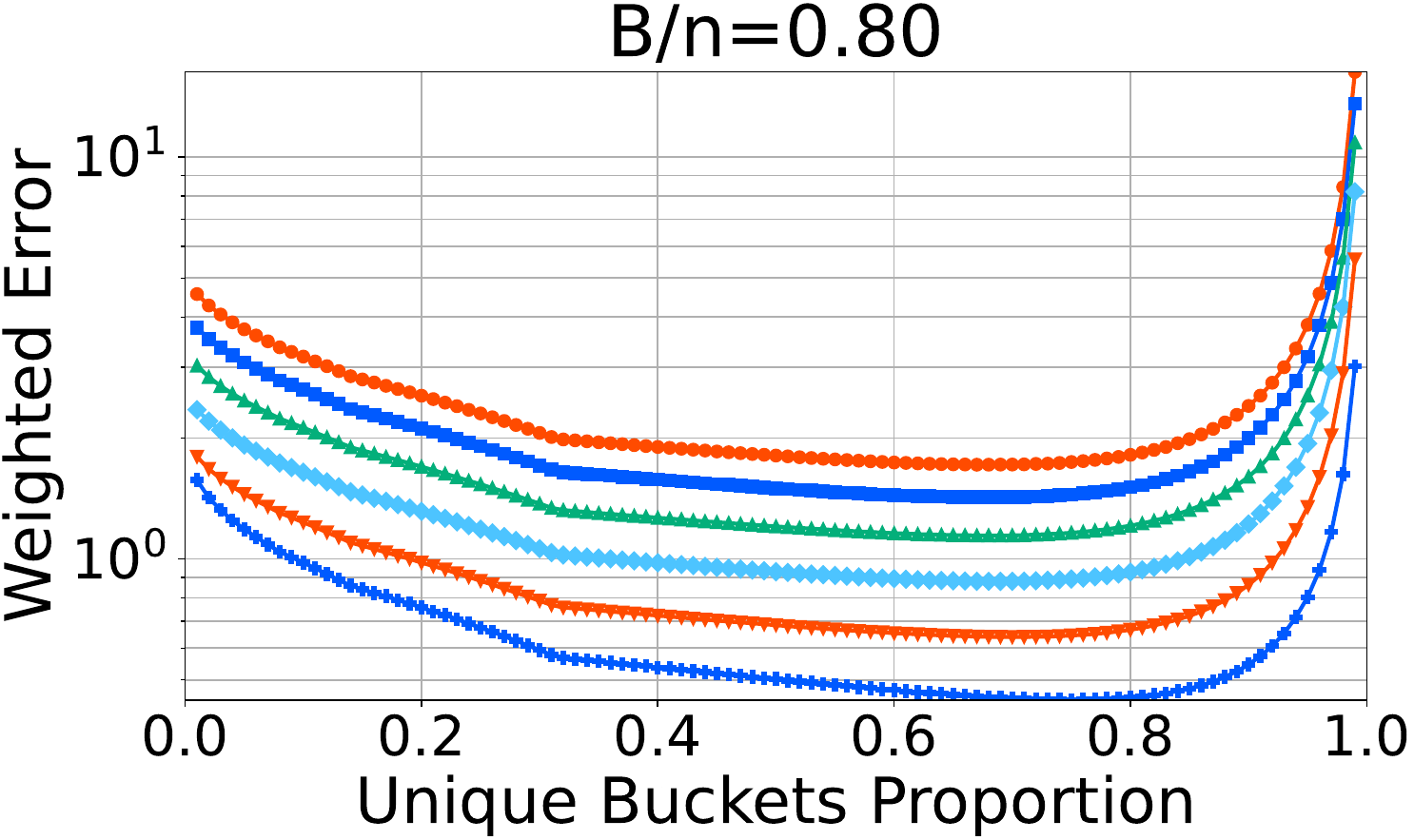}
    \end{minipage}
    \begin{minipage}[tb]{0.48\columnwidth}
        \centering
        \includegraphics[width=\columnwidth]{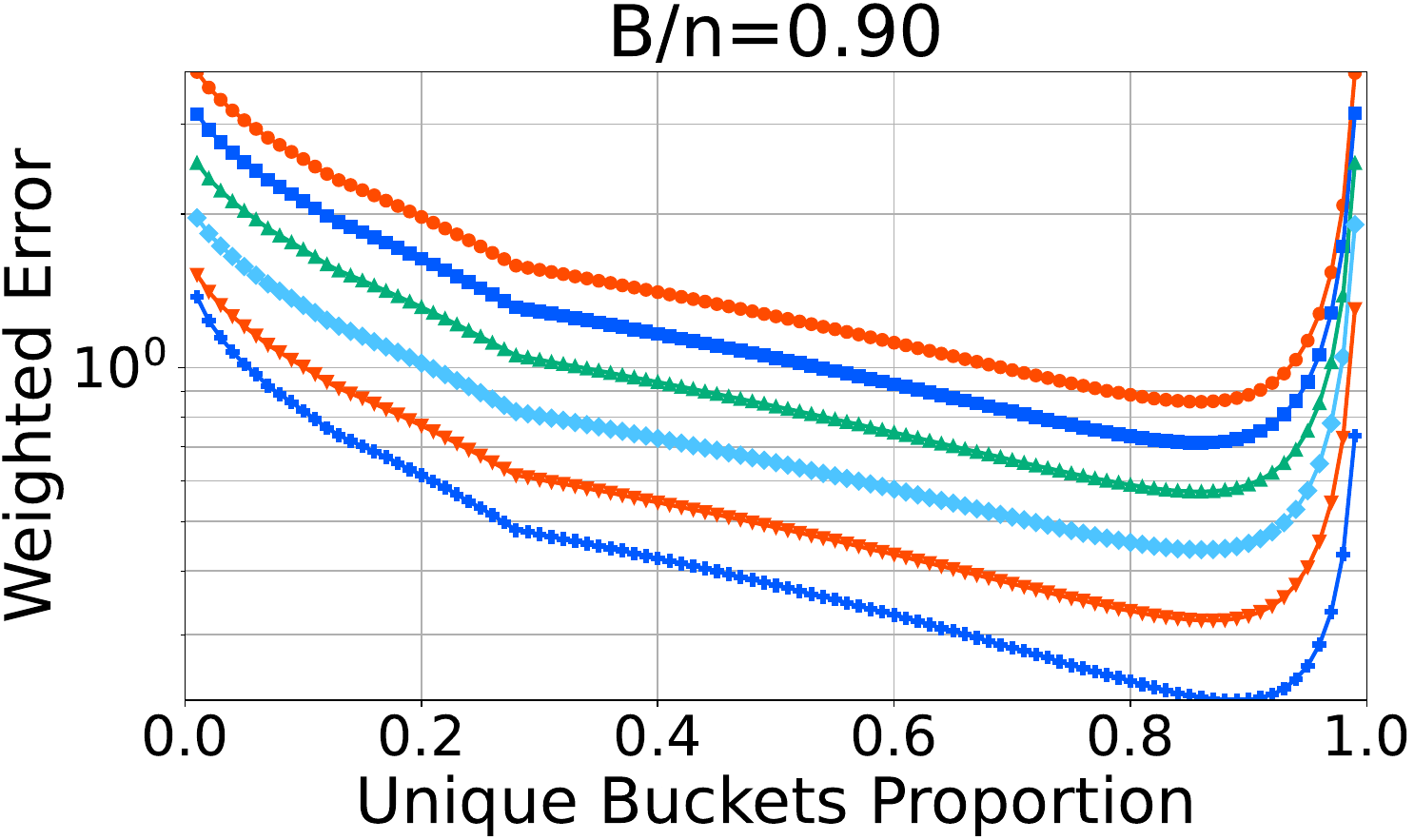}
    \end{minipage}
    \begin{minipage}[tb]{0.15\columnwidth}
        \centering
        \includegraphics[width=\columnwidth]{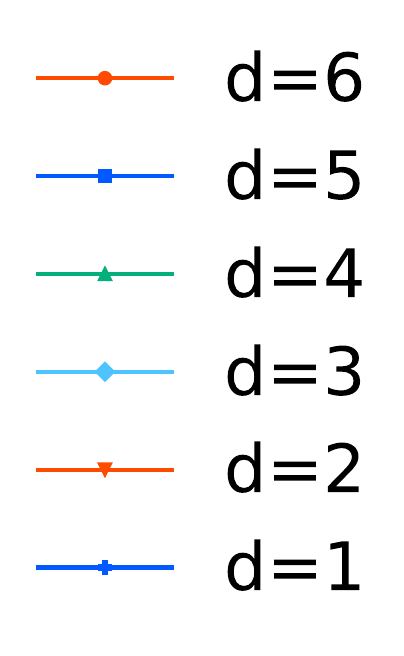}
    \end{minipage}

    \caption{The relationship between $2$ types of parameters and the weighted error on AOL Query Logs Dataset (Average over $10$ runs)}
    \label{fig:20}
\end{figure}

\subsection{Wikipedia Pageviews Dataset}
The experimental results obtained from the Wikipedia Pageviews Dataset \cite{wikimedia2026pageviews} are shown in \cref{fig:21}.
\begin{figure}[tb]
    \centering
    \begin{minipage}[tb]{0.48\columnwidth}
        \centering
        \includegraphics[width=\columnwidth]{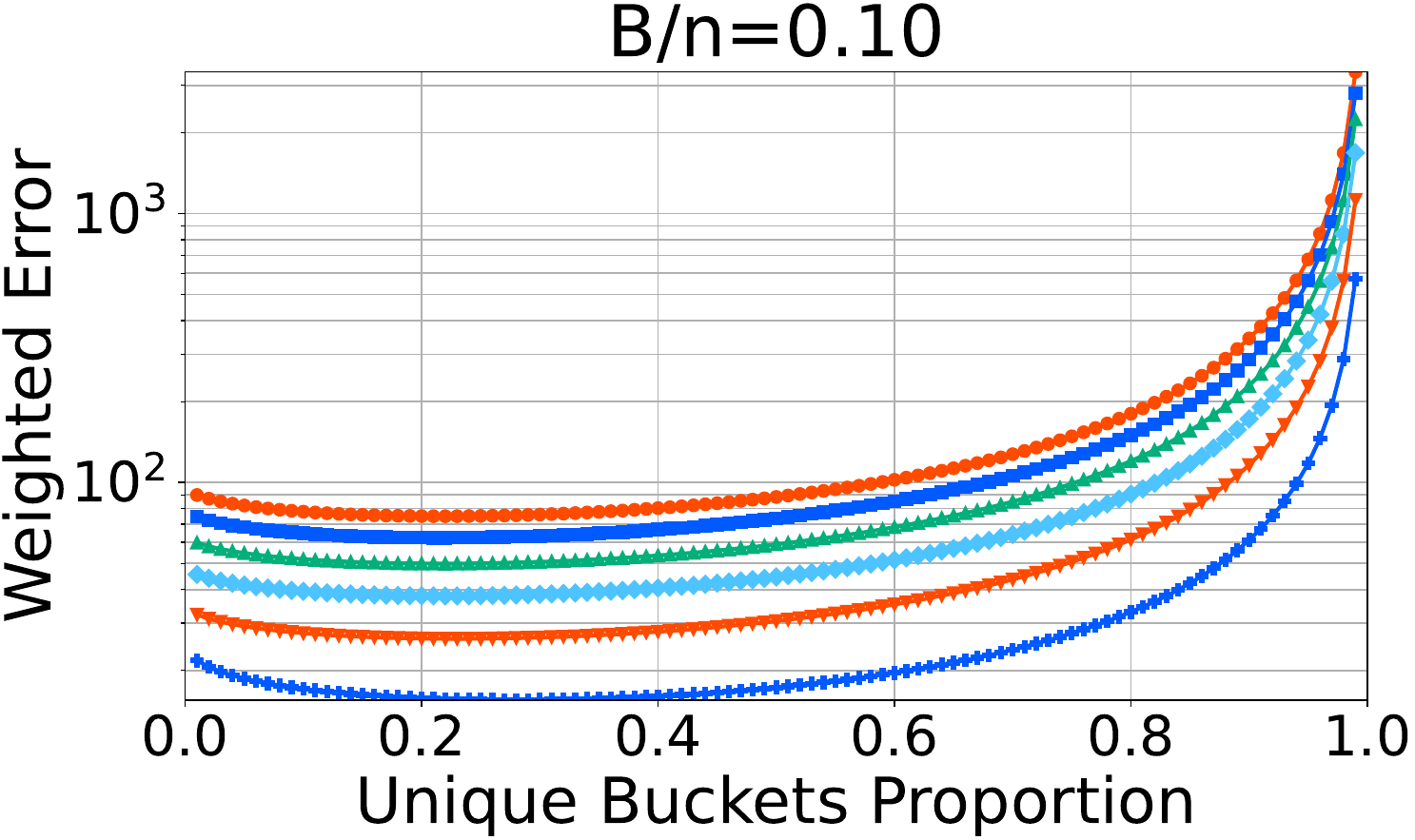}
    \end{minipage}
    \begin{minipage}[tb]{0.48\columnwidth}
        \centering
        \includegraphics[width=\columnwidth]{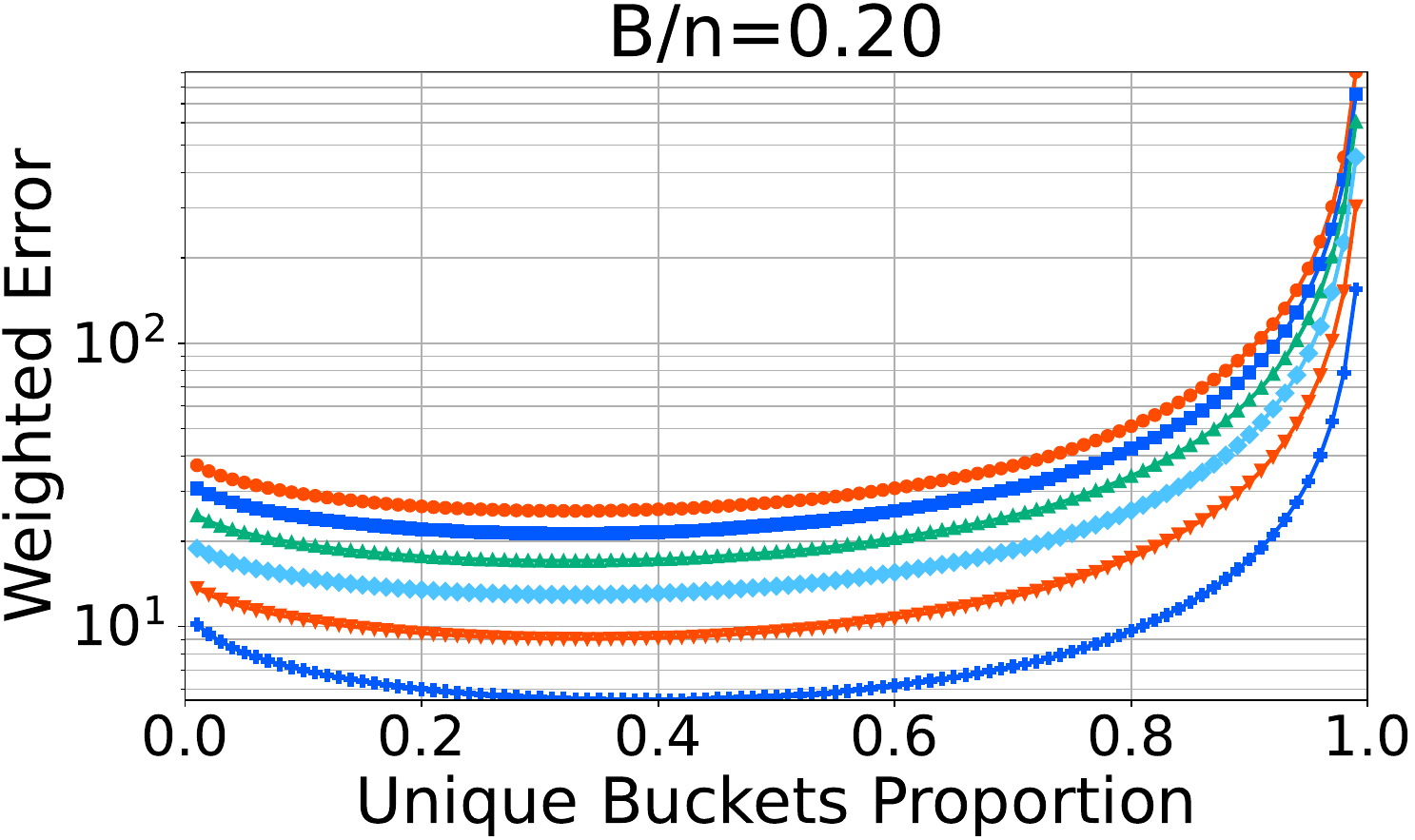}
    \end{minipage}
    \begin{minipage}[tb]{0.48\columnwidth}
        \centering
        \includegraphics[width=\columnwidth]{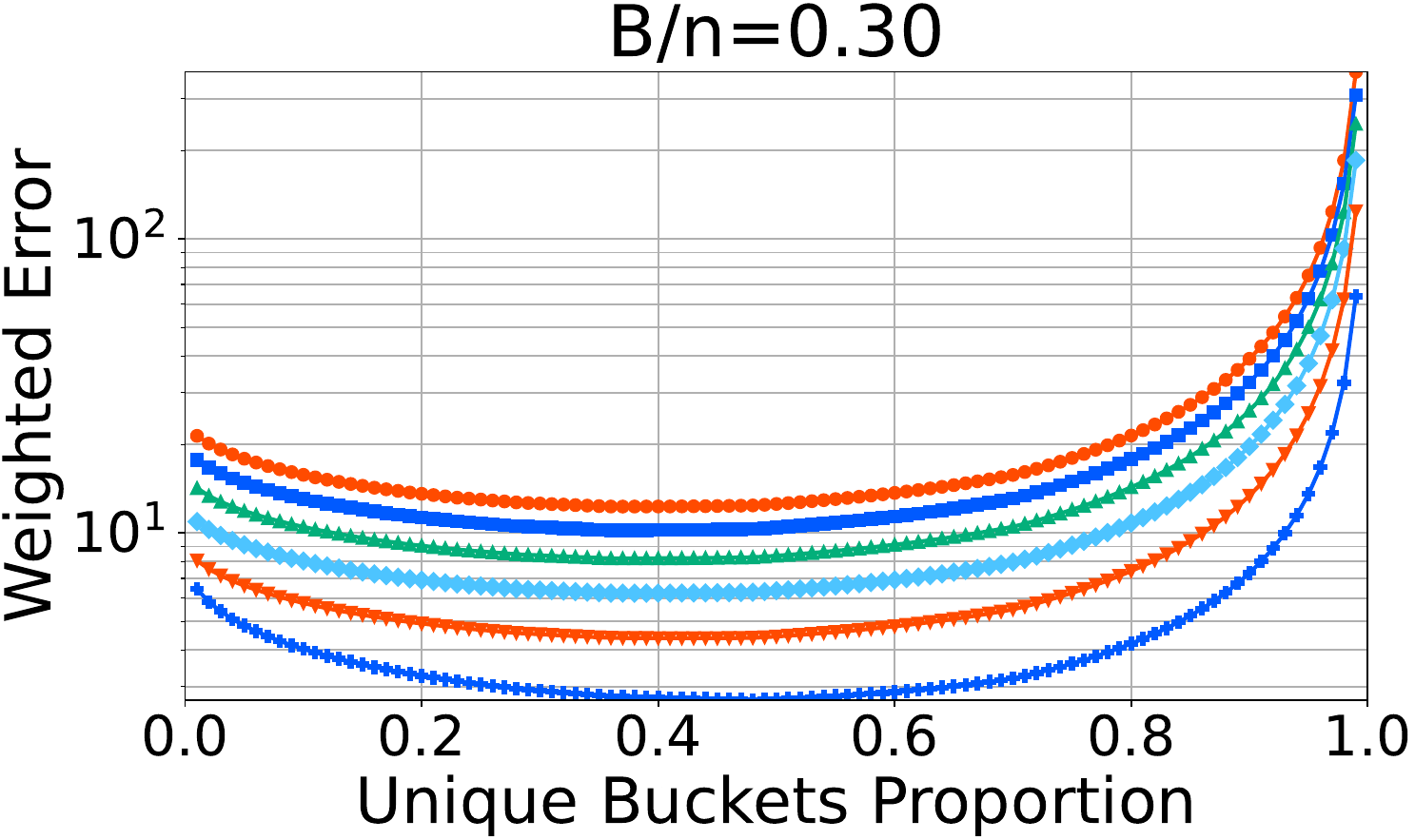}
    \end{minipage}
    \begin{minipage}[tb]{0.48\columnwidth}
        \centering
        \includegraphics[width=\columnwidth]{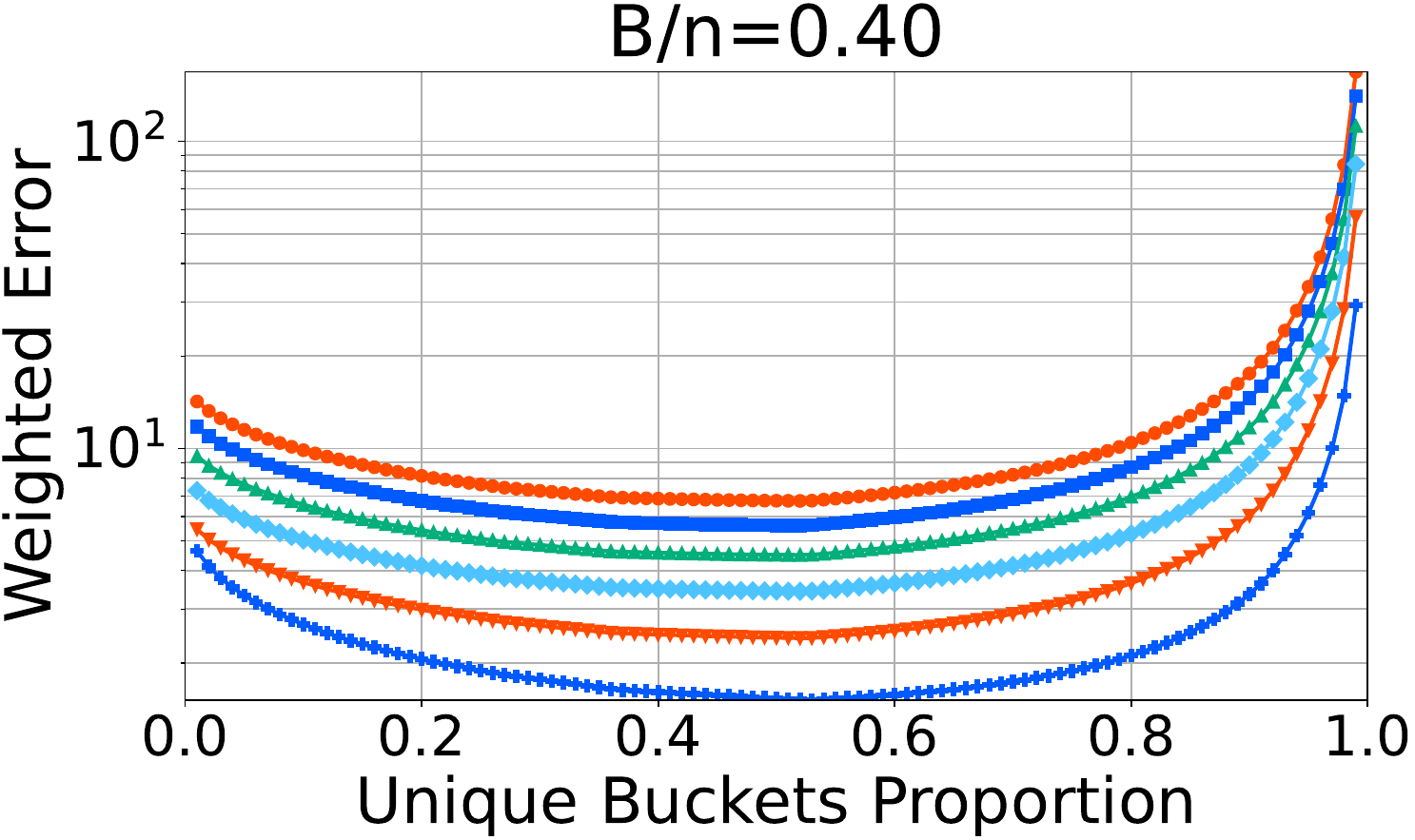}
    \end{minipage}
    \begin{minipage}[tb]{0.48\columnwidth}
        \centering
        \includegraphics[width=\columnwidth]{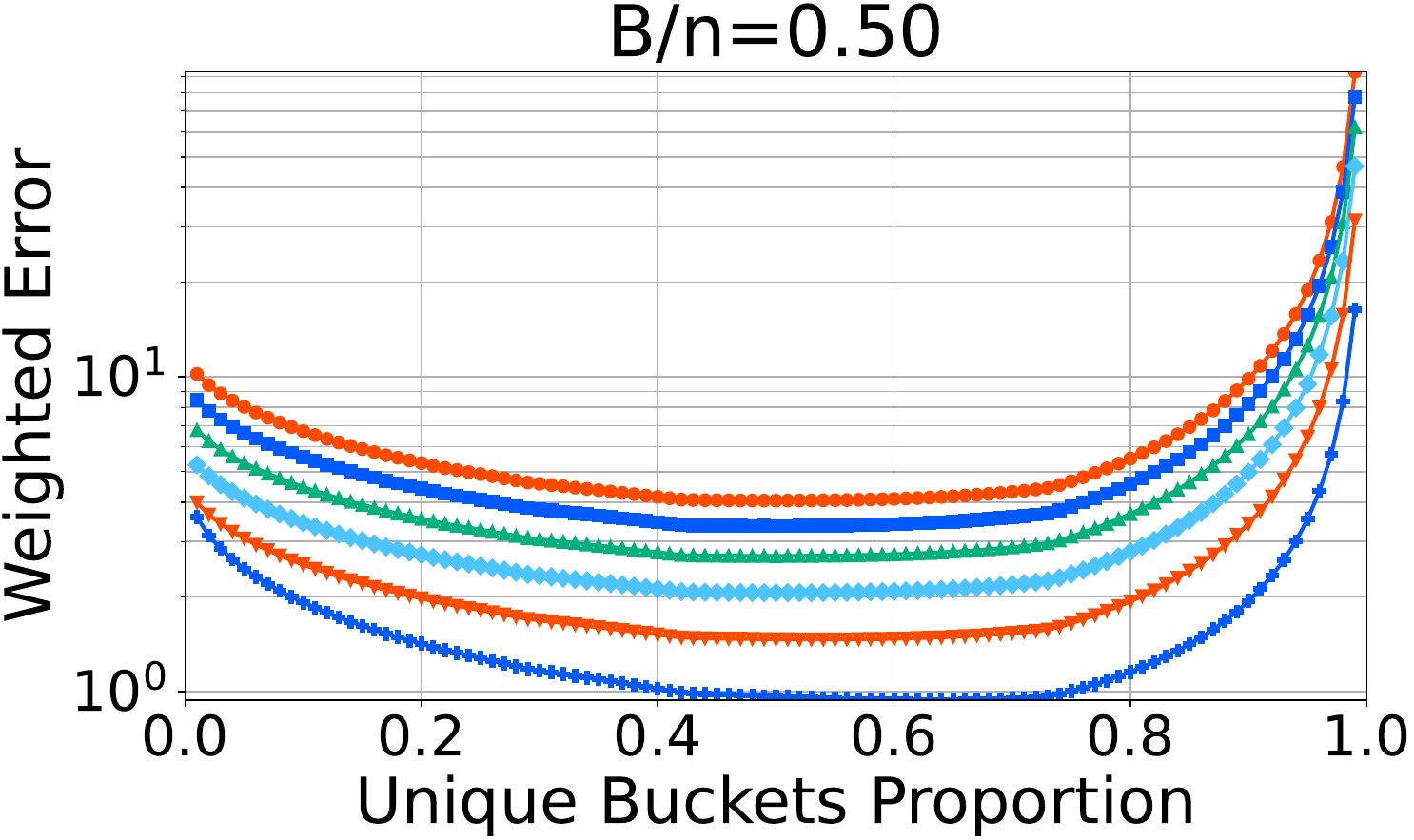}
    \end{minipage}
    \begin{minipage}[tb]{0.48\columnwidth}
        \centering
        \includegraphics[width=\columnwidth]{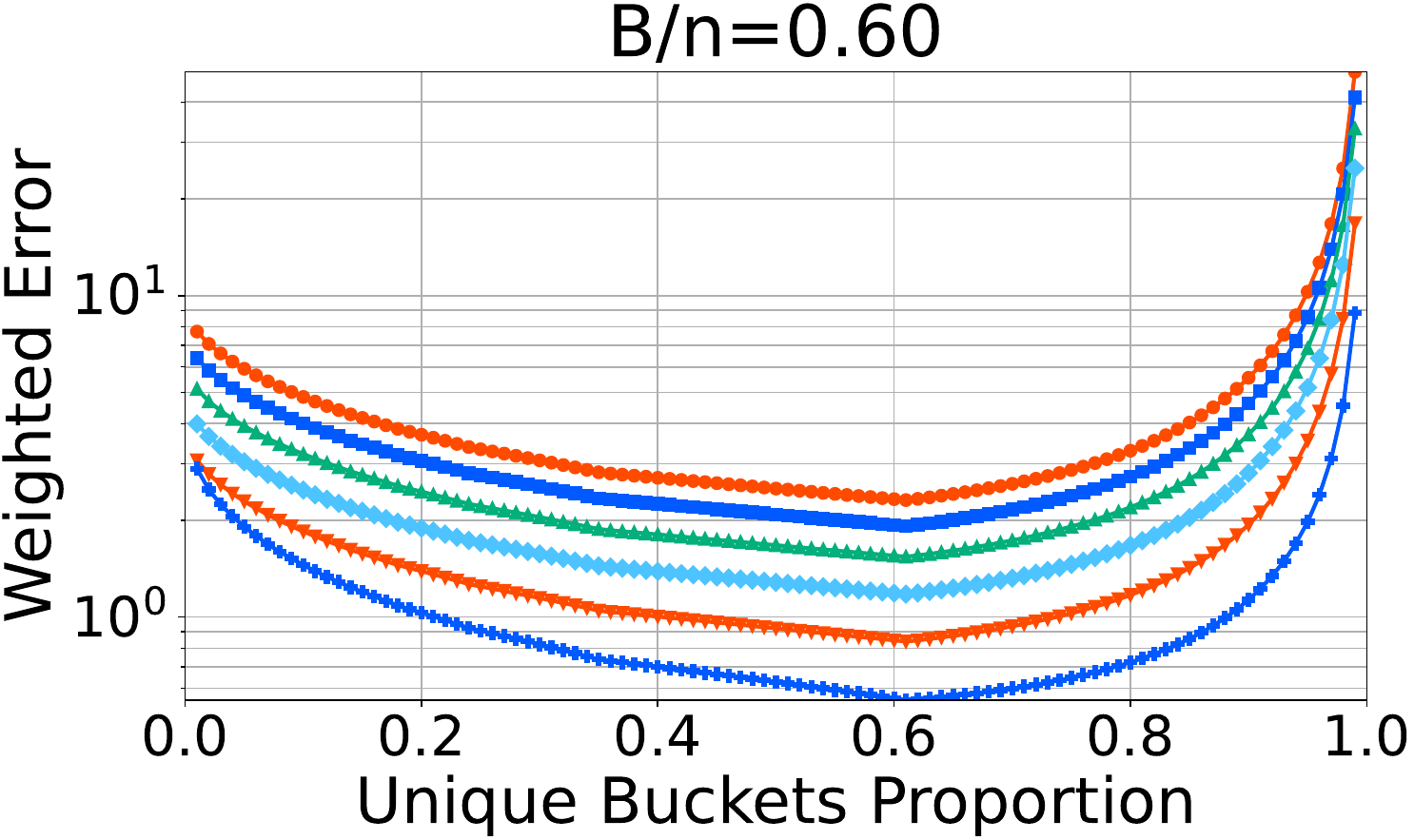}
    \end{minipage}
    \begin{minipage}[tb]{0.48\columnwidth}
        \centering
        \includegraphics[width=\columnwidth]{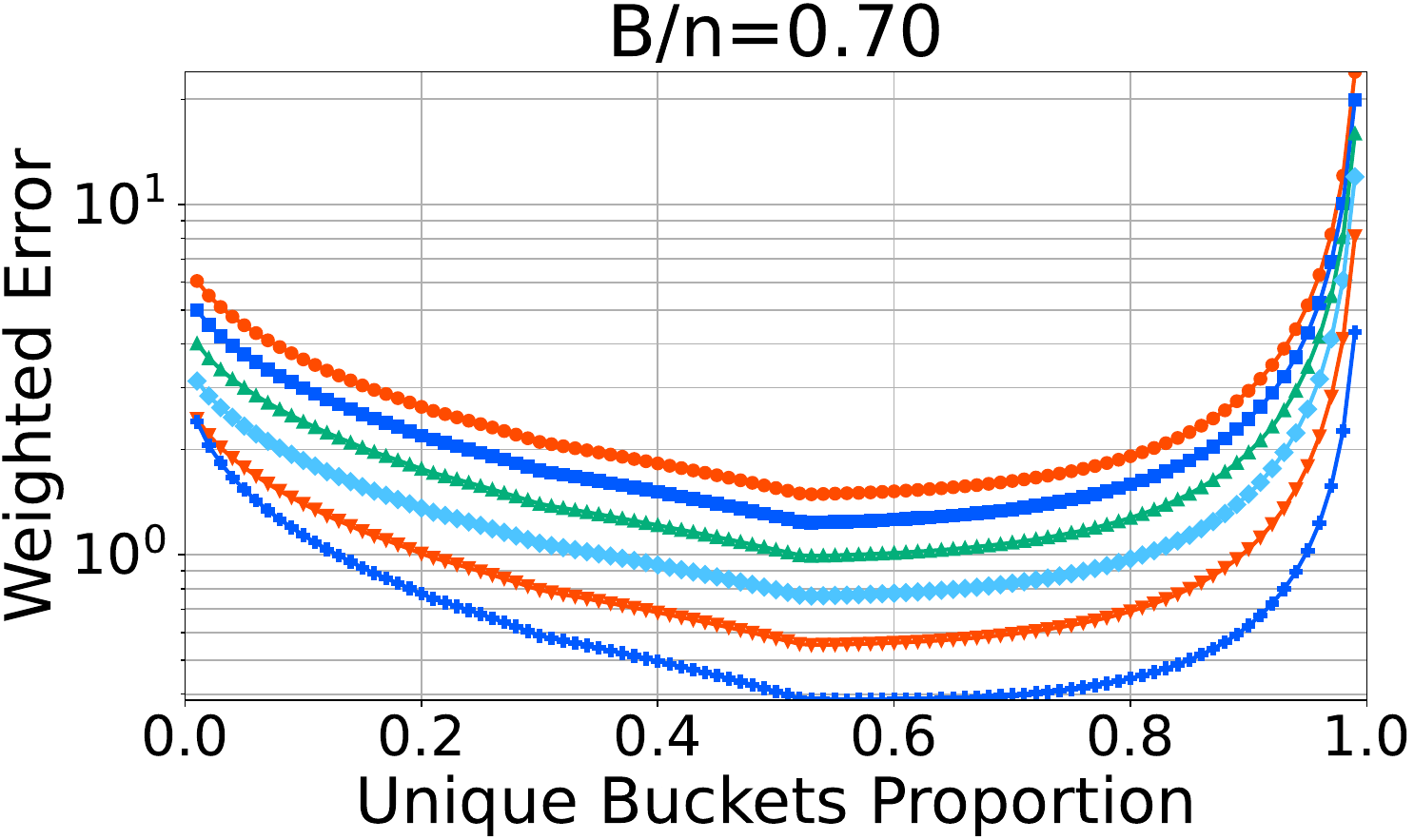}
    \end{minipage}
    \begin{minipage}[tb]{0.48\columnwidth}
        \centering
        \includegraphics[width=\columnwidth]{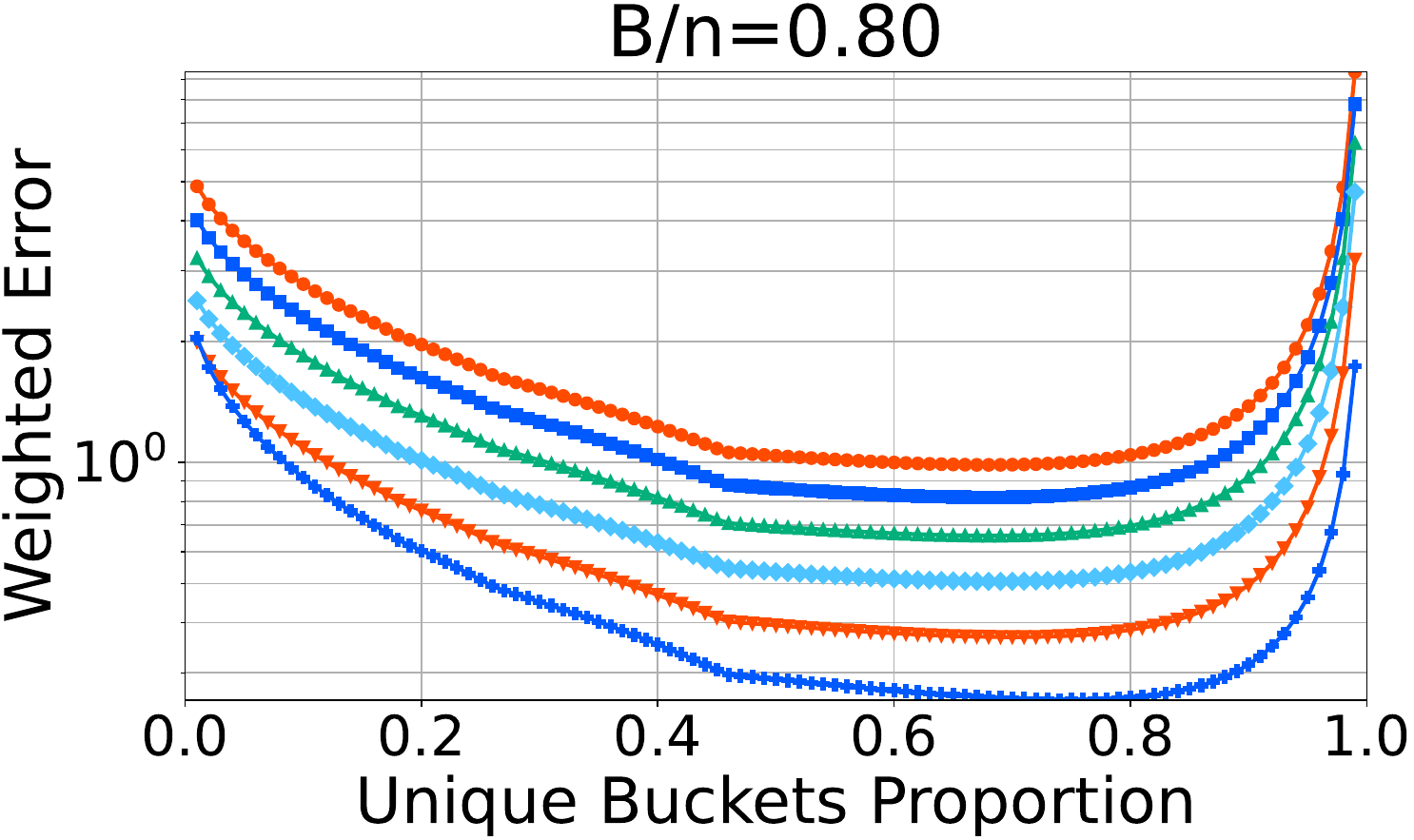}
    \end{minipage}
    \begin{minipage}[tb]{0.48\columnwidth}
        \centering
        \includegraphics[width=\columnwidth]{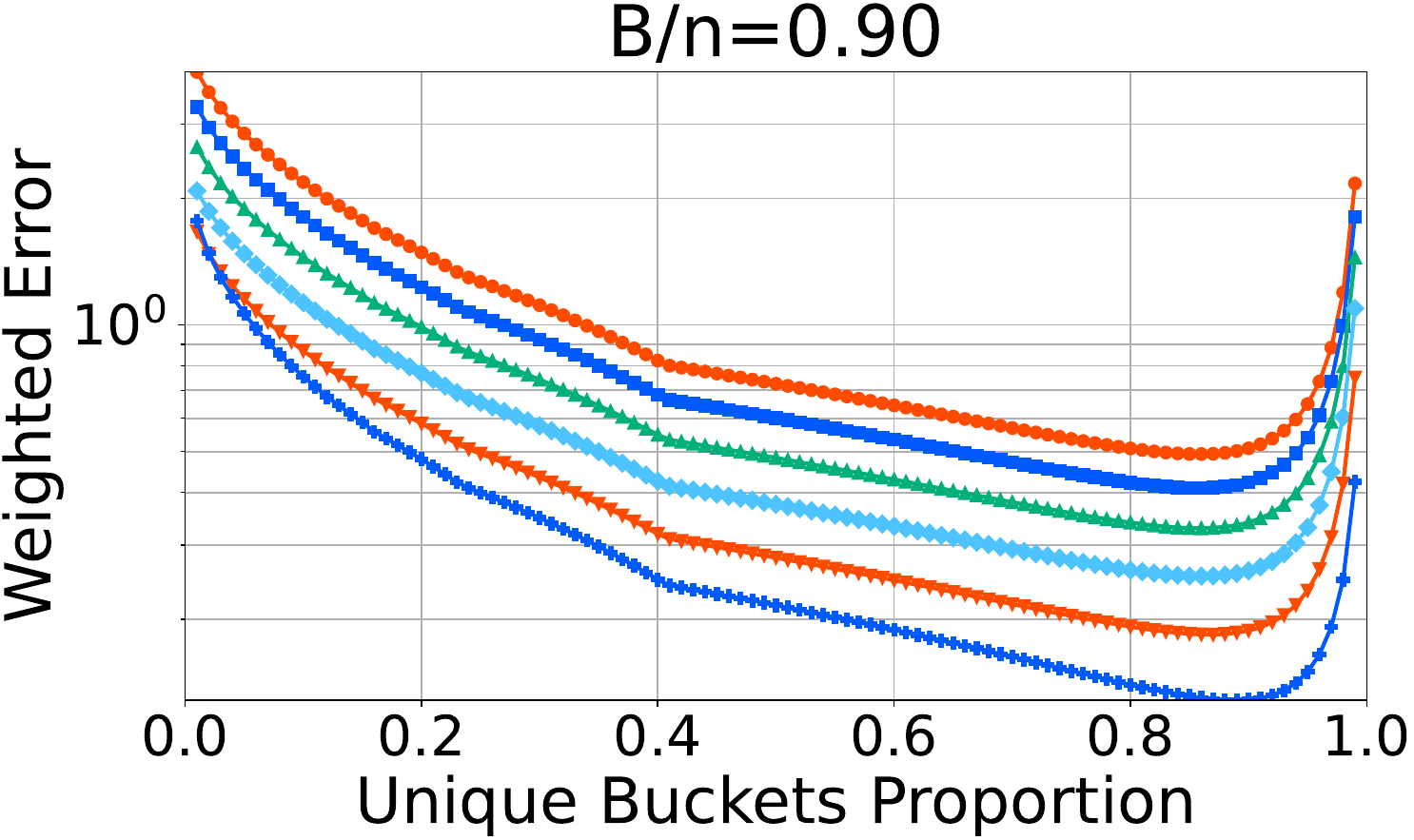}
    \end{minipage}
    \begin{minipage}[tb]{0.15\columnwidth}
        \centering
        \includegraphics[width=\columnwidth]{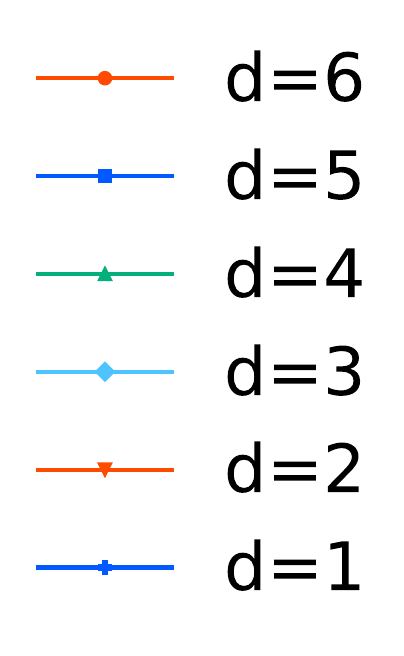}
    \end{minipage}
    \caption{The relationship between $2$ types of parameters and the weighted error on Wikipedia Pageviews Dataset (Average over $10$ runs)}
    \label{fig:21}
\end{figure}

\subsection{Google Books Ngram Viewer Datasets}
The experimental results obtained from the Google Books Ngram Viewer Datasets \cite{michel2011quantitative} are shown in \cref{fig:22}.
\begin{figure}[tb]
    \centering
    \begin{minipage}[tb]{0.48\columnwidth}
        \centering
        \includegraphics[width=\columnwidth]{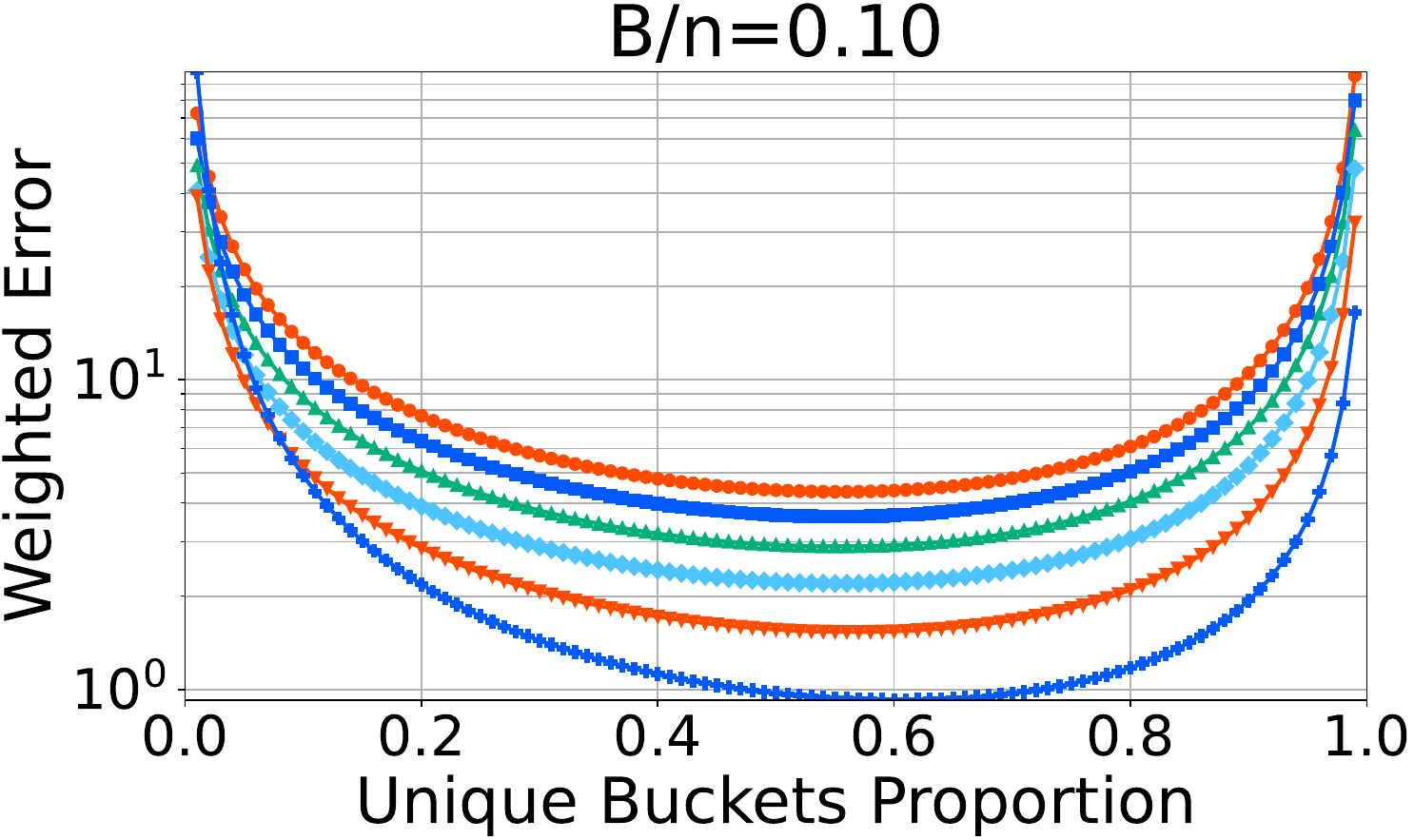}
    \end{minipage}
    \begin{minipage}[tb]{0.48\columnwidth}
        \centering
        \includegraphics[width=\columnwidth]{figures/lcms_error/google/space_0.2000.pdf}
    \end{minipage}
    \begin{minipage}[tb]{0.48\columnwidth}
        \centering
        \includegraphics[width=\columnwidth]{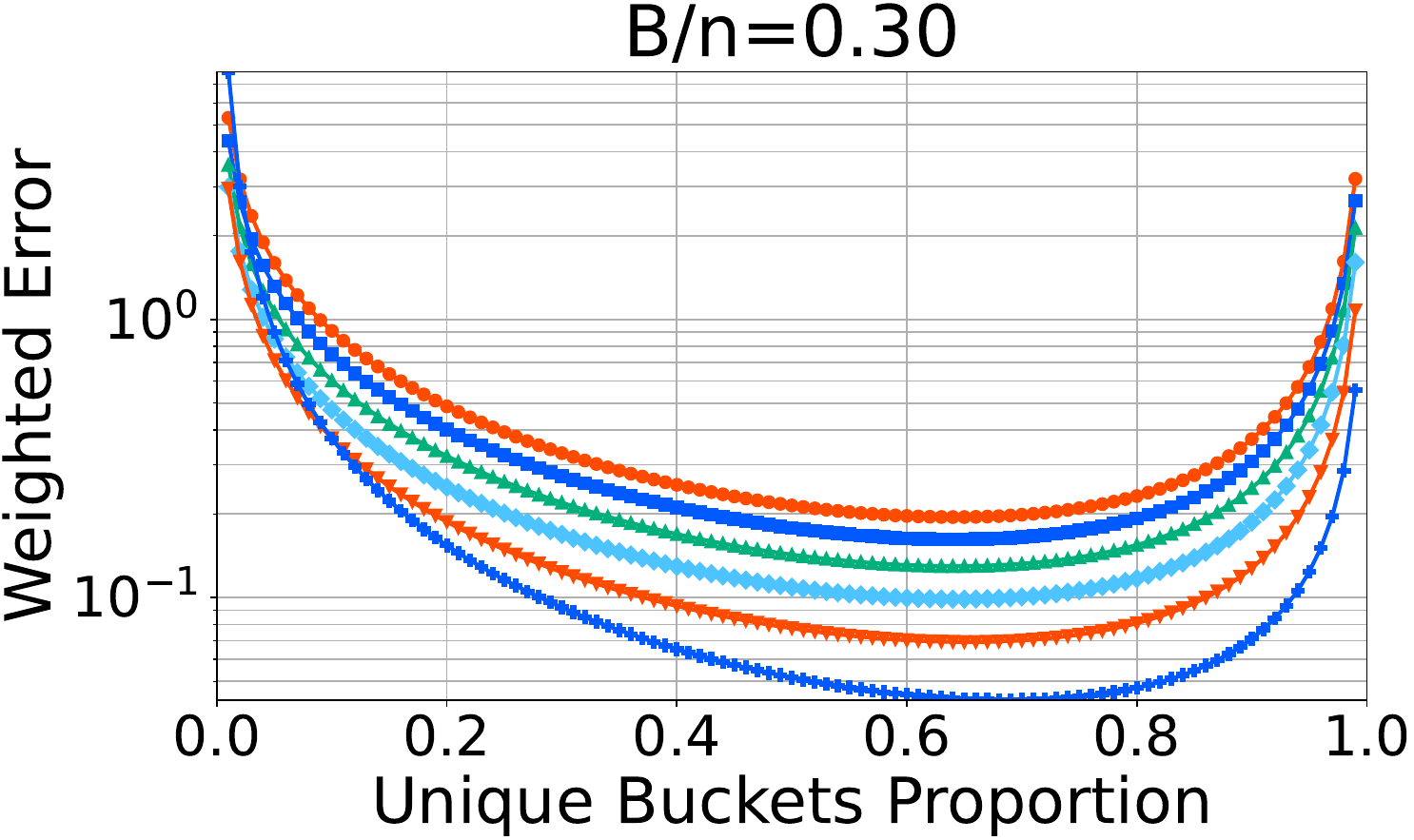}
    \end{minipage}
    \begin{minipage}[tb]{0.48\columnwidth}
        \centering
        \includegraphics[width=\columnwidth]{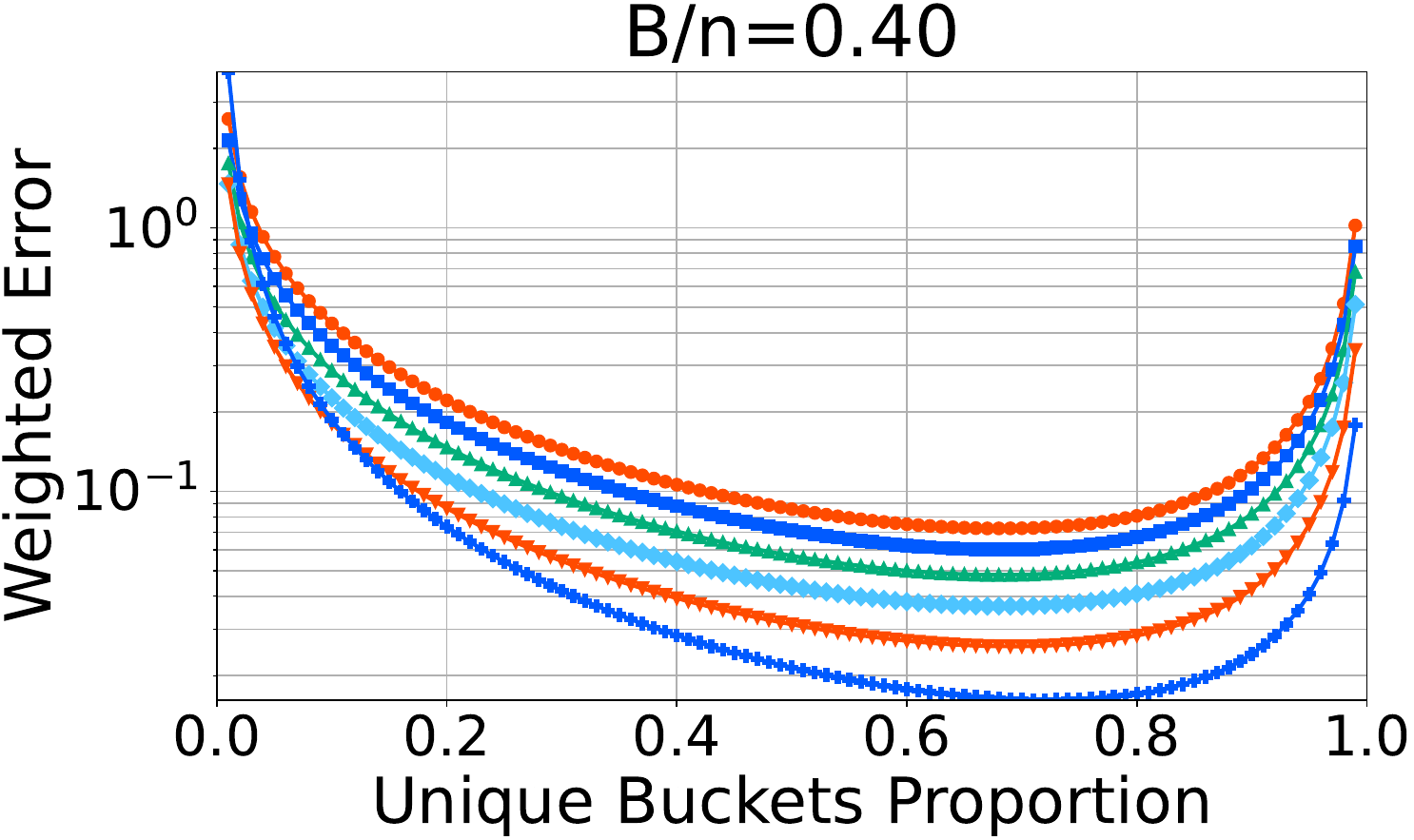}
    \end{minipage}
    \begin{minipage}[tb]{0.48\columnwidth}
        \centering
        \includegraphics[width=\columnwidth]{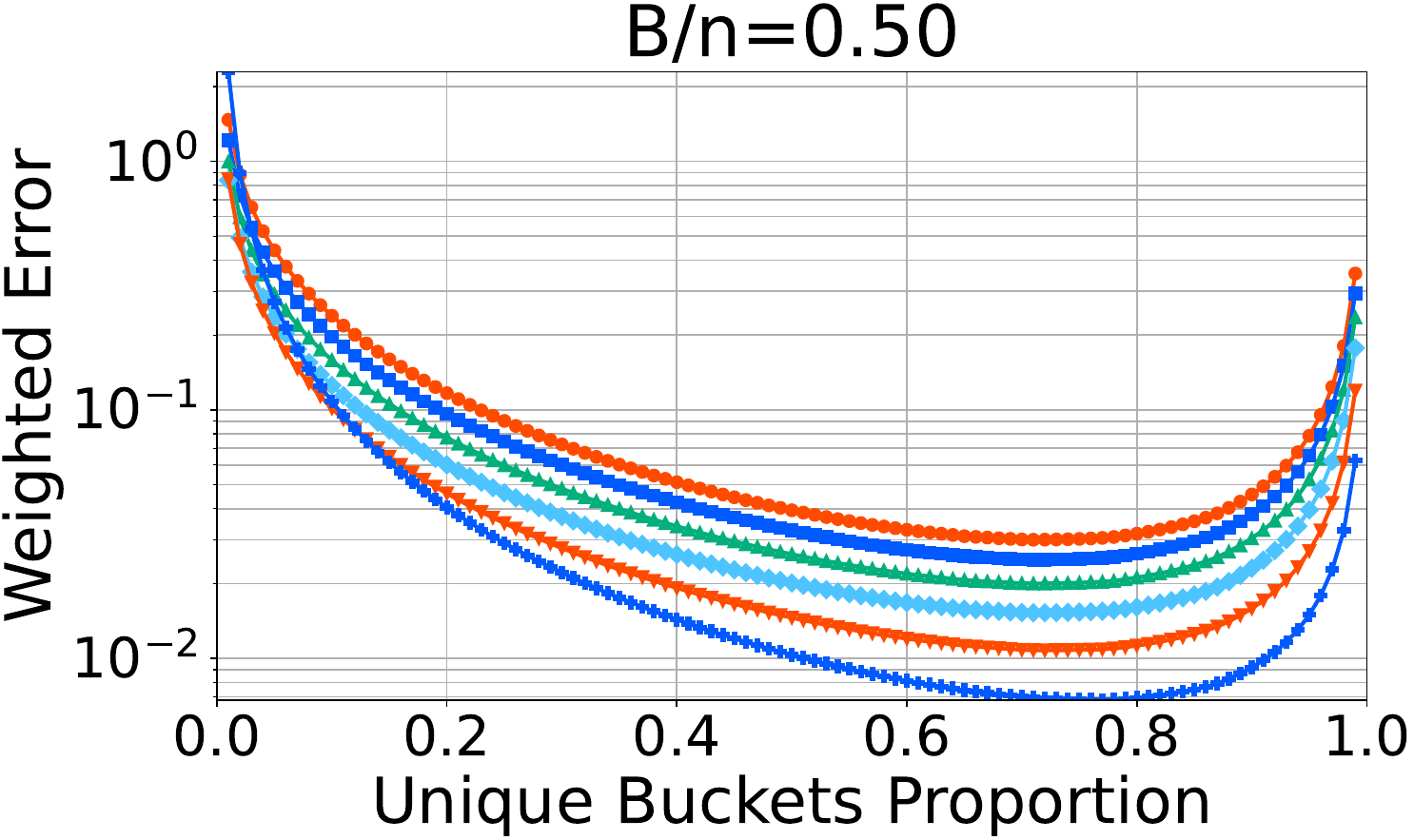}
    \end{minipage}
    \begin{minipage}[tb]{0.48\columnwidth}
        \centering
        \includegraphics[width=\columnwidth]{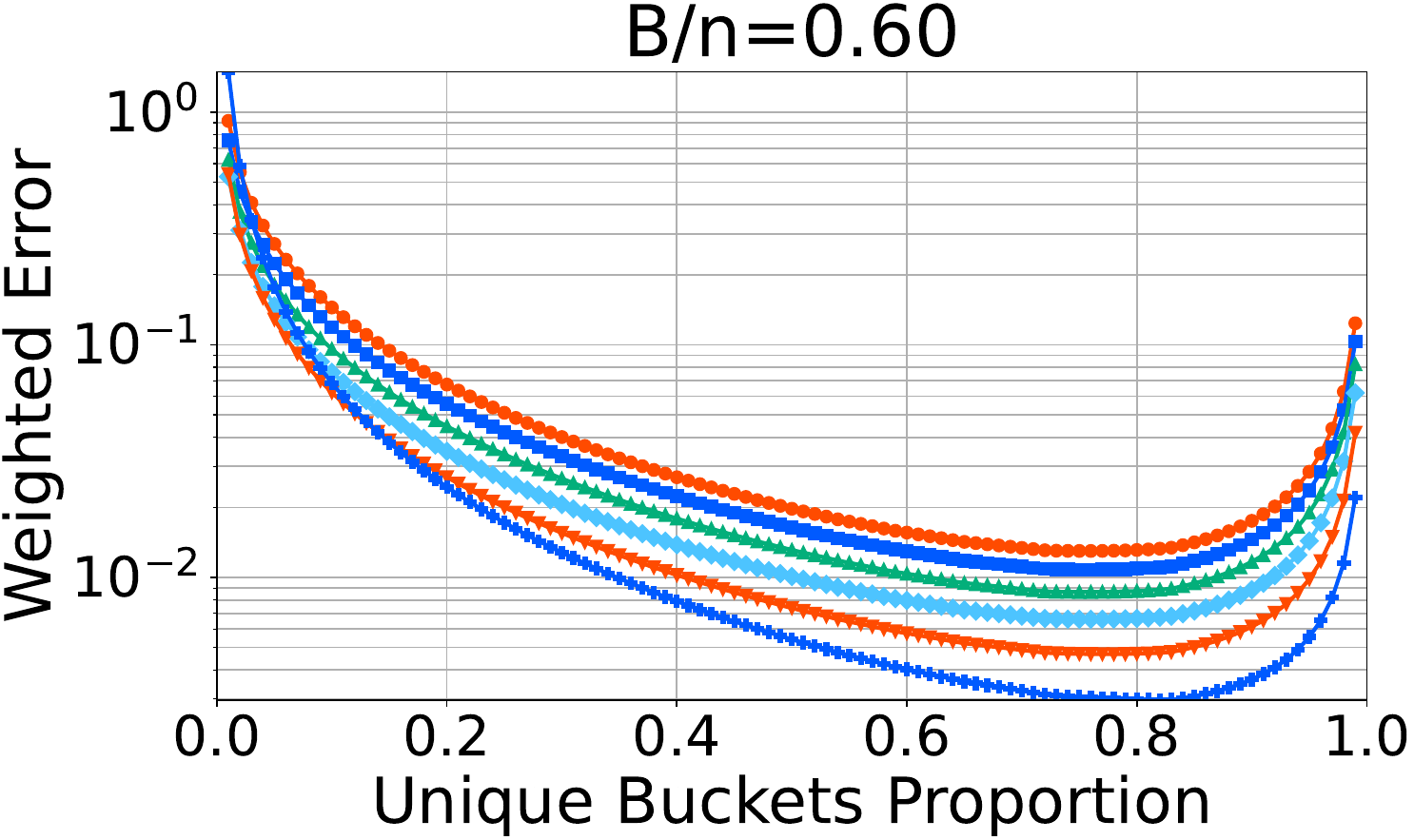}
    \end{minipage}
    \begin{minipage}[tb]{0.48\columnwidth}
        \centering
        \includegraphics[width=\columnwidth]{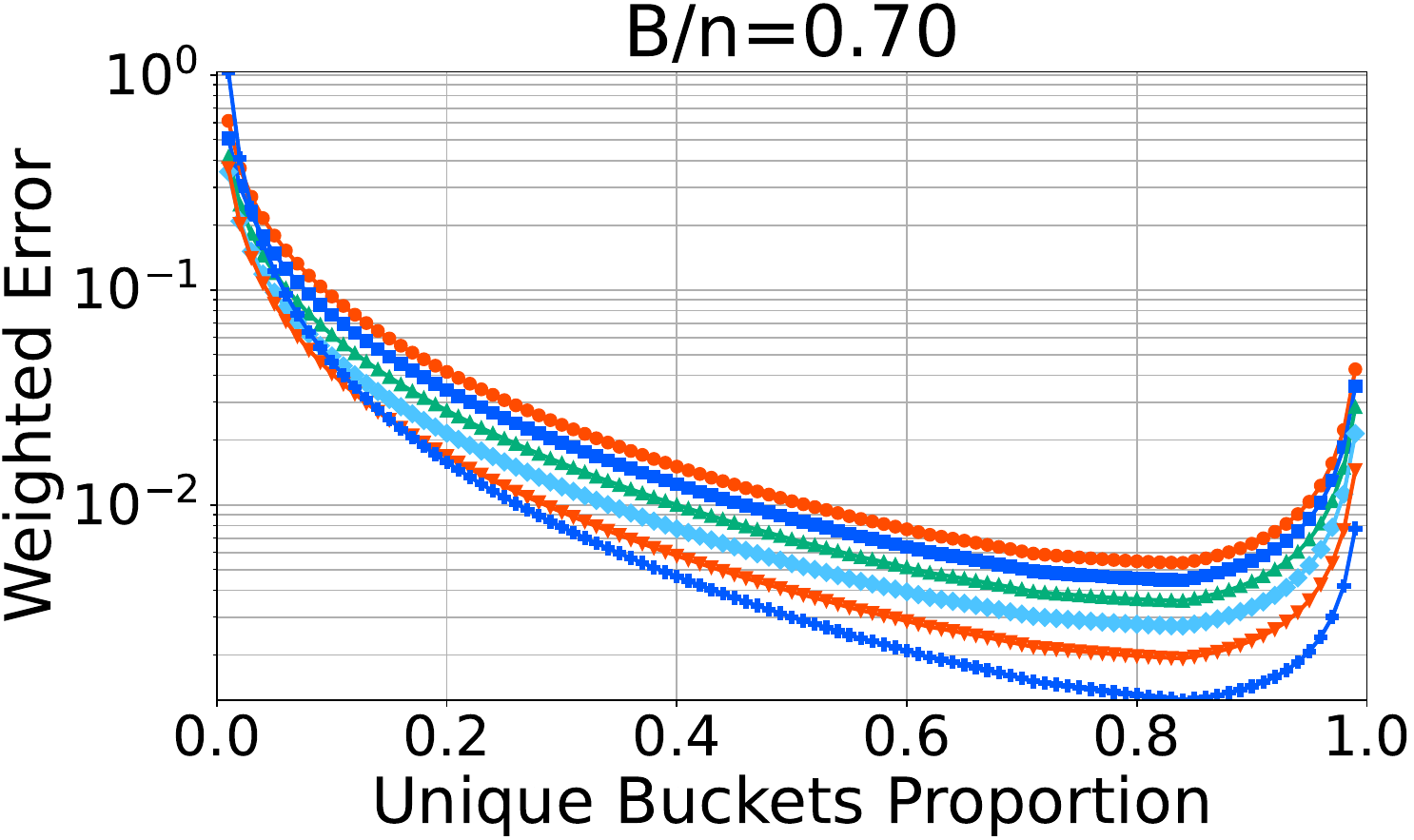}
    \end{minipage}
    \begin{minipage}[tb]{0.48\columnwidth}
        \centering
        \includegraphics[width=\columnwidth]{figures/lcms_error/google/space_0.8000.pdf}
    \end{minipage}
    \begin{minipage}[tb]{0.48\columnwidth}
        \centering
        \includegraphics[width=\columnwidth]{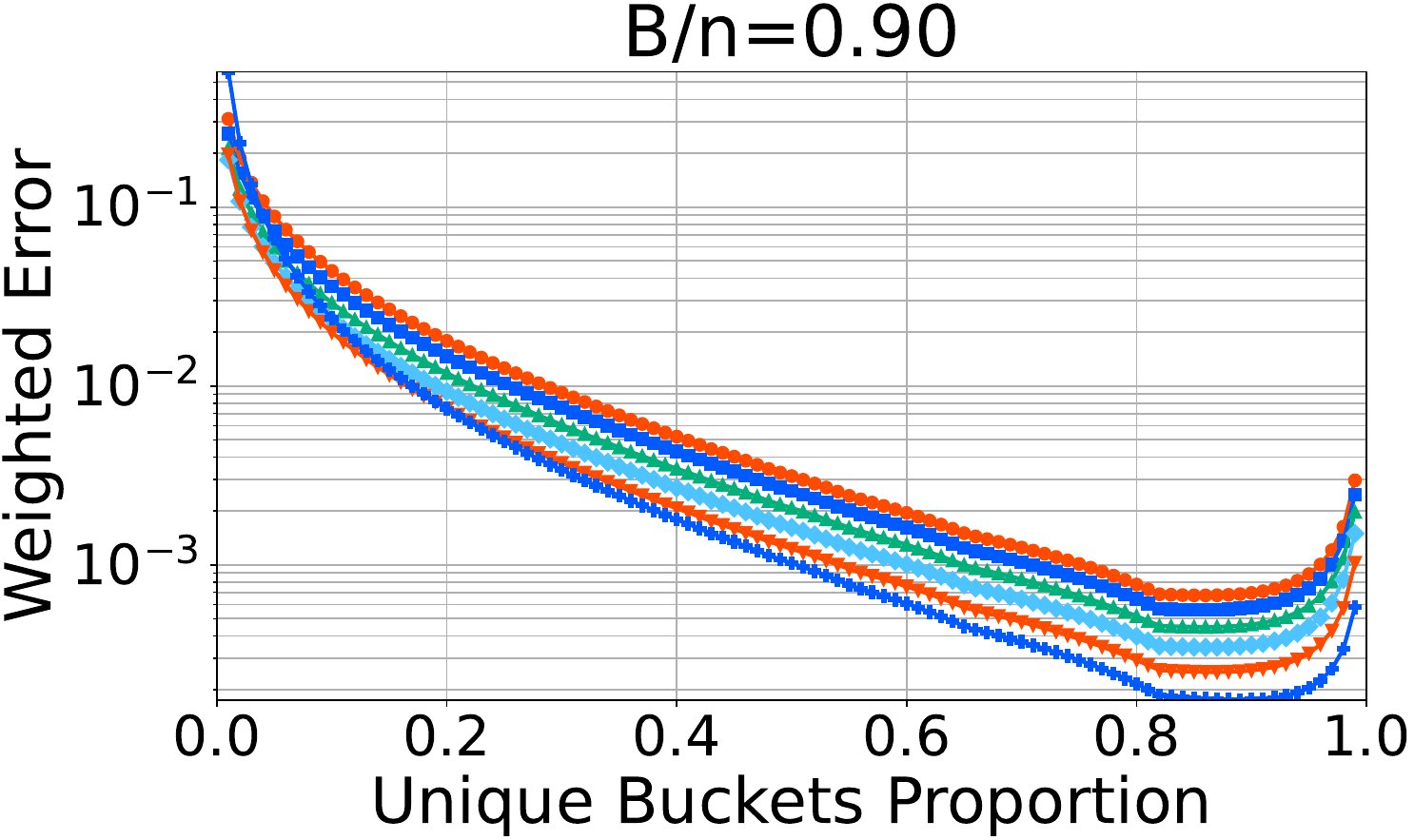}
    \end{minipage}
    \begin{minipage}[tb]{0.15\columnwidth}
        \centering
        \includegraphics[width=\columnwidth]{figures/lcms_error/google/legend.pdf}
    \end{minipage}
    \caption{The relationship between $2$ types of parameters and the weighted error on Google Books Ngram Viewer Datasets (Average over $10$ runs)}
    \label{fig:22}
\end{figure}

\end{document}